\newtheorem{theorem}{Theorem}
\newtheorem{prop}{Proposition}
\newtheorem{defi}{Definition}
\newcommand{\com}[1]{{\color{red}#1}}
\newcommand{\com}[1]{}
\newcommand{\bs}{\boldsymbol}
\newcommand{\bb}{\mathbb}
\begin{document}

\title{A Non-stochastic Learning Approach to Energy Efficient Mobility Management}

\author{Cong~Shen,~\IEEEmembership{Senior Member,~IEEE,}
        Cem~Tekin,~\IEEEmembership{Member,~IEEE,}
        and~Mihaela~van~der~Schaar,~\IEEEmembership{Fellow,~IEEE}% <-this % stops a space
\thanks{C. Shen is with the Department of Electronic Engineering and Information Science, University of Science and Technology of China. E-mail:  \texttt{congshen@ustc.edu.cn}.}
\thanks{C. Tekin is with the Department of Electrical and Electronics Engineering, Bilkent University, Ankara, Turkey, 06800. E-mail:  \texttt{cemtekin@ee.bilkent.edu.tr}.}
\thanks{M. van der Schaar is with the Electrical Engineering Department, University of California, Los Angeles (UCLA), USA. E-mail:  \texttt{mihaela@ee.ucla.edu}.}
%\thanks{The work of C. Shen has been supported by National Natural Science Foundation of China under project 61572455. The work of M. van der Schaar has been supported by NSF grants 1218136 and 1462245.}
}

\maketitle
\begin{abstract}

Energy efficient mobility management is an important problem in modern wireless networks with heterogeneous cell sizes and increased nodes densities.  We show that optimization-based mobility protocols cannot achieve long-term optimal energy consumption, particularly for ultra-dense networks (UDN). To address the complex dynamics of UDN, we propose a \textit{non-stochastic} online-learning approach which does not make any assumption on the statistical behavior of the small base station (SBS) activities. In addition, we introduce  \textit{handover cost} to the overall energy consumption, which forces the resulting solution to explicitly minimize frequent handovers.  The proposed Batched Randomization with Exponential Weighting (BREW) algorithm relies on \textit{batching} to explore in bulk, and hence reduces unnecessary handovers. We prove that the regret of BREW is sublinear in time, thus guaranteeing its convergence to the optimal SBS selection. We further study the robustness of the BREW algorithm to delayed or missing feedback. Moreover, we study the setting where SBSs can be dynamically turned on and off. We prove that sublinear regret is impossible with respect to arbitrary SBS on/off, and then develop a novel learning strategy, called {ranking expert} (RE), that simultaneously takes into account the handover cost and the availability of SBS. To address the high complexity of RE, we propose a {contextual ranking expert} (CRE) algorithm that only assigns experts in a given context. Rigorous regret bounds are proved for both RE and CRE with respect to the best expert. Simulations show that not only do the proposed mobility algorithms  greatly reduce the system energy consumption, but they are also robust to various dynamics which are common in practical ultra-dense wireless networks.
\end{abstract}

%\begin{IEEEkeywords}
%Energy efficient mobility management, ultra-dense networks (UDN), frequent handover (FHO), non-stochastic learning.
%\end{IEEEkeywords}

\section{Introduction}
\label{sec:intro}

The ultra-dense deployment of small base stations (SBS) \cite{Quek:13} introduces new challenges to the wireless network design. Among these challenges, mobility management has become one of the key bottlenecks to the overall system performance. Traditionally, mobility management was designed for large cell sizes and infrequent handovers, which works well with the RF-planned macro cellular networks. The industry protocol is simple to implement and offers reliable handover performance \cite{SesiaLTE}. However, introducing SBSs into the network drastically complicates the problem due to the irregular cell sizes, unplanned deployment, and unbalanced load distributions \cite{Andrews:14}. Furthermore, ultra-dense deployment makes the problem even harder, as user equipments (UE) in ultra-dense networks (UDN) can have many possible serving cells, and mobile UEs may trigger very frequent handovers even without much physical movement.  Simply applying existing macro solutions leads to a poor SBS mobility performance. In particular, total energy consumption can be significant when the mobility management mechanism is not well designed \cite{Xenakis:13}. 

To address these challenges, research on mobility management has recently attracted a lot of attention from both academia and industry \cite{Andrews:14}. The research has mainly been  based on \textit{optimization theory}, i.e., given the various UE and BS information, the design aims at maximizing certain system utility by finding the best UE-BS pairing. The problem is generally non-convex and optimal or suboptimal solutions have been proposed. In \cite{Ye:13}, a utility maximization problem is formulated for the optimal user association which accounts for both user's RF conditions and the load situation at the BS. For energy efficient user association, \cite{Mesodiakaki:14} aims at maximizing the ratio between the total data rate of all users and the total energy consumption for downlink heterogeneous networks. Althunibat et. al. \cite{Althunibat:14} propose a handover policy in which low energy efficiency from the serving BS triggers a handover, and the design objective is to maximize the achievable energy efficiency under proportionally fair access. Another optimization criterion of minimizing the total power consumption, while satisfying the user's traffic demand for UDN is considered in \cite{Bottai:14}. 

These existing solutions have been proved effective for less-densified heterogeneous networks, but they may perform poorly when the network density becomes high. Examples include the so-called frequent handover (FHO), Ping-Pong (PP), and other handover failures (HOF) problems, which commonly occur when the UE is surrounded by many candidate BSs \cite{3gpp.36.839}. In this scenario, a UE may select its serving SBS based on some optimization criterion, e.g., best biased signal strength as in 3GPP, or other system metric as in \cite{Ye:13,Mesodiakaki:14,Althunibat:14,Bottai:14}. However, system dynamics such as very small movement of the UE or its surrounding objects can quickly render the solution sub-optimal, triggering user handover procedure in a frequent manner. Probably more important than the loss of throughput, the FHO and PP problems significantly increase the system energy consumption, as much energy is wasted on unnecessary handovers.

These new problems have motivated us to adopt an \textit{online learning} approach, rather than an optimization one, to mobility management.  The rationale is that the goal of mobility should not be to maximize the \emph{immediate} performance at the time of handover, as most of the existing handover protocols do. Rather, mobility should build a UE-BS association that maximizes the \emph{long-term} performance. In fact, one can argue that the optimization approach with immediate performance maximization inevitably results in some of the UDN mobility problems such as FHO and PP. This is because the optimization-based solutions depend on the system information, and once the system configuration or the context information evolves, either the previously optimal solution no longer offers optimal performance\footnote{The industrial intuition is that good performance at the time of handover should carry over for the near future until the next handover is triggered. However, this is no longer valid with UDN, in which RF and load conditions can change dynamically and FHO/PP needs to be avoided.}, or a new optimization needs to run which can lead to increased energy consumption when the optimality criterion is frequently broken. Furthermore, optimization-based solutions rely on the accurate knowledge of various system information, which may not be available \textit{a priori} but must be learned over time.

The key challenge for efficient mobility management is the unavailability of accurate information of the candidate SBSs in an \emph{uncertain} environment. Had the UE known \textit{a priori} which SBS offers the best long-term performance, it would have chosen this SBS from the beginning and stuck to it throughout, thus avoiding the frequent handovers which lead to energy inefficiency while achieving optimal energy consumption for service. Without this omniscient knowledge, however, the UE has to balance immediate gains (choosing the current best BS) and long-term performance (evaluating other candidate BSs). Multi-armed bandit (MAB) can be applied to address such exploration and exploitation tradeoff that arises in the mobility learning problem, and there are a few works applying the \textit{stochastic} bandit algorithms \cite{Simsek:15,Capdevielle:12} to address this challenge. Mobility management in a heterogeneous network with high-velocity users is considered in \cite{Simsek:15}, where the solution uses stochastic MAB theory to learn the optimal cell range expansion parameter. In \cite{Capdevielle:12}, the authors propose a stochastic MAB-based interference management algorithm, which improves the handover performance by reducing the effective interference. 

There are three major issues in applying a stochastic bandit approach to the considered mobility management problem. Firstly, one must be able to assume that there exists a well-behaved stochastic process that guides the generation of the reward sequence for each SBS. In practice, however, it is difficult to unravel such statistical model for the reward distributions at SBS. Practical wireless networks with a moderate number of nodes or users are already complex enough that simple stochastic models, as  often used in stochastic bandit algorithms, cannot accurately characterize their behavior. Another problem is that the time duration within which a particular statistical model may be adequate is short due to high UDN system dynamics. As a result, there may not exist enough time to learn which statistical model to adopt, let alone utilize it to achieve optimal performance. Furthermore, in a practical system, there may be multiple UEs being served by one SBS, and the energy consumption depends not only on the SBS activity but also on the activities of other UEs, including their time-varying mobility decisions, traffic load, service requirement, etc. As a result, an \textit{uncontrolled} stochastic process cannot adequately capture  practical interactions between the UEs and the SBSs, and probabilistic modelling may not accurately match the real-world UDN energy behavior. Second, the majority of the stochastic MAB literature considers reward sequences that are generated by either an {\em independent and identically distributed} (i.i.d.) process \cite{lai1,auer}, or a  {\em Markov} process \cite{tekin2012online}. These restrictions may not accurately capture the SBS/UE behavior, resulting in a mismatch to the real-world performance. Lastly, the existing solutions cannot solve the FHO problem, because they do not consider the additional loss incurred when the UE performs handover from one SBS to another. In fact, most of the stochastic bandit solutions incur fairly frequent ``exploration'' operations, which directly lead to the FHO problem. 

Due to the aforementioned challenges that cannot be easily addressed by a stochastic approach, we opt out from using the stochastic MAB formulation. In this work, we solve the UDN mobility problem with the objective of minimizing long-term energy consumption, by using a \textit{non-stochastic} model. Specifically, we do not make any assumption on the statistical behavior of the SBS activities. Instead, the energy consumption of any SBS is allowed to vary arbitrarily. This is a fundamental deviation from the previous \textit{stochastic} solutions. A comparison of this work to the existing literature is provided in Table~\ref{tab:comp}. Note that the non-stochastic MAB problem is significantly harder than the stochastic counterpart due to the adversarial nature of the loss sequence generation. We  develop a new set of algorithms for loss sequences with switching penalties, delayed or missing feedback, and dynamic on/off behavior, and proved their effectiveness with both rigorous regret analysis and comprehensive numerical simulations. 

\begin{table*}
\caption{Comparison of our work with existing solutions.}
\label{tab:comp}
\centering
\begin{tabular}{|l|c|c|c|} 
\hline
{} &\cite{Ye:13,Althunibat:14,Mesodiakaki:14,Bottai:14} & \cite{Simsek:15,Capdevielle:12} & This work \\ \hline
\emph{Design tool} & Optimization & Stochastic learning & Non-stochastic learning \\ \hline
\emph{Optimize for energy consumption} & No & Yes & Yes \\ \hline
\emph{Distributed solution} & No, except \cite{Ye:13} & Yes & Yes \\ \hline
\emph{Solve FHO} & No & No & Yes \\ \hline
\emph{Forward-looking} & No & Yes & Yes \\ \hline
\emph{Robustness} & Not Considered & Not  Considered  & Considered \\ \hline
\emph {Performance study} & Analysis \& Simulation & Simulation & Analysis \& Simulation \\ \hline
\end{tabular}
\end{table*}

The main contributions of this paper are summarized below. 
\begin{itemize}
\item We propose a \textit{non-stochastic} energy consumption framework for mobility management. To the best of the authors' knowledge, this is the first work that applies the non-stochastic bandit theory to address mobility management in wireless networks. 
\item We explicitly add the \textit{handover cost} to the utility function to model the additional energy consumptions due to handovers, and thus force the optimal solution to minimize frequent handovers.
\item We present a Batched Randomization with Exponential Weighting (BREW) algorithm that addresses the frequent handover problem and achieves low system energy consumption. The performance of BREW is rigorously analyzed and a finite-time upper bound for performance loss due to learning is proved. We further study the effect of delayed or missing feedback and analyze the performance impact. 
\item We analyze the dynamic SBS on/off model and prove that sublinear regret is impossible for arbitrary SBS on/off. To solve this challenging problem, we create a novel strategy set, called \textit{ranking expert}, that is used in conjunction with a BREW-type solution with respect to expert advice. The novelty of the expert construction is that \textit{it simultaneously takes into account both the handover cost and the availability of SBS}. The regret upper bound with respect to the best expert advice is  proved.
\end{itemize}

The rest of the paper is organized as follows. The system model is presented in Section \ref{sec:model}. Section \ref{sec:learning} discusses the non-stochastic learning approach for mobility management, including the BREW algorithm in Section \ref{sec:BREW}, regret analysis in \ref{sec:reg_BREW}, robustness in Section \ref{sec:extBREW}, and performance analysis of the industry solutions in Section \ref{sec:3gpp}. Dynamic SBS presence is studied in Section \ref{sec:onoff}. Simulation results are presented in Section \ref{sec:sim}. Finally, Section \ref{sec:conc} concludes the paper.

\section{System Model} % and Problem Formulation}
\label{sec:model}

\subsection{Network Model}
\label{sec:net_model}

An ultra-dense cellular network with $N$ small base stations (SBS) and $M$ user equipments (UE) is considered in this work. We denote the SBS set as $\mathcal{N}_{\mathrm{SBS}}=\{1, \cdots, N\}$. We are mostly concerned with stationary or slow moving UEs, which represents a typical indoor scenario where about 80\% of the total network traffic occurs \cite{NSN:2011}. A representative UE in UDN may have multiple SBSs as the potential serving cell, but needs to choose only one serving SBS. In other words, advanced technologies that allow for multiple serving cells are not considered.  One exemplary system is illustrated in Fig.~\ref{fig:FHO}, where UE $1$ may discover up to $6$ candidate SBSs in its neighborhood, possibly with very similar signal strength or load conditions. The mobility management system makes decisions on which UE is idly camped on (idle mode mobility) or actively served by (connected mode mobility) which SBS at any given time.   % When UE $1$ first enters the system, e.g., at boot-up, the system must decide which is its serving SBS and attach UE $1$ to that SBS. Later on, depending on the change of RF and system load conditions, UE $1$ can be handed over from its serving SBS to another SBS, sometimes even without explicit physical movement. 
%  and the UE set as $\mathcal{M}_{\mathrm{UE}}=\{1, \cdots, M\}$.

\begin{figure}[htb]
    \centering
    \centerline{\includegraphics[width=0.4\textwidth]{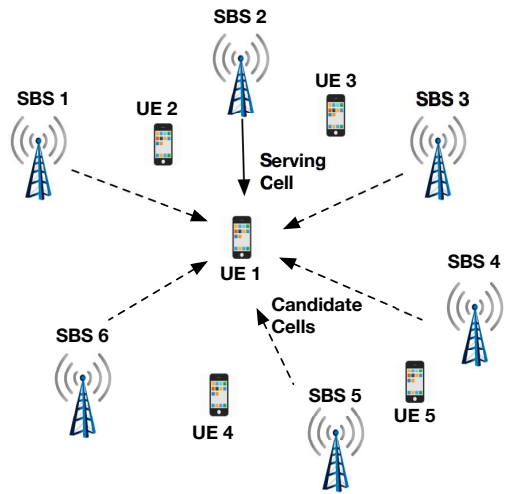}}
    \caption{An illustration of mobility management in UDN.}
    \label{fig:FHO}
\end{figure}

The mobility decision is traditionally made at the SBS (e.g., X2 handover in LTE) or Evolved Packet Core (e.g., S1 handover at the Mobility Management Entity in LTE). Recently, there has been an emerging trend of designing user-centric mobility management, particularly for the future 5G standard \cite{Boccardi:14}. In this work, we consider user-centric mobility management and let the UE make mobility decisions. We assume that mobility management is operated in a synchronous time-slotted fashion.  It is worth noting that we do not make any assumption on whether the candidate SBSs are operating in the same channel or different channels, as our work applies to both of these deployments.

The sequence of operations within each slot can be illustrated in Fig~\ref{fig:sys}. Specifically, at the beginning of a slot, the UE chooses its serving SBS and starts a downlink data transmission with the paired SBS. Upon the completion of the time slot, the UE can observe the total energy consumed over this slot, and the operation repeats in the next slot. 

\begin{figure}[htb]
    \centering
    \centerline{\includegraphics[width=0.8\textwidth]{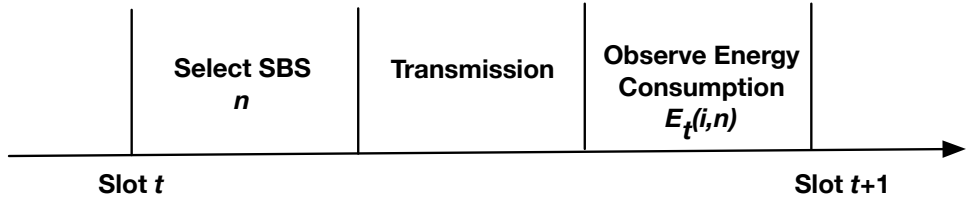}}
    \caption{The mobility management operation for UE $i$ in time slot $t$.}
    \label{fig:sys}
\end{figure}

\subsection{Non-stochastic Energy Consumption Model}
\label{sec:EC_model}

We study the mobility problem with minimizing long-term energy consumption as the system design objective, and adopt a \textit{non-stochastic} modelling of the energy consumption of each SBS.  Specifically, SBS $n$, $1 \leq n \leq N$ incurs a total energy consumption $E_{t}(i,n)$ if it serves UE $i$ at time slot $t$, $1 \leq t \leq T$. It is assumed that $E_{t}(i,n) \in [E_{\textup{min}},  E_{\textup{max}}]$. In this work, we make no statistical assumptions about the nature of the process that generates $E_{t}(i,n)$. In fact, we allow $\{E_{t}(i,n) \}$ to be any arbitrary sequence for any $n$ and any $i$. This is a fundamental difference to the stochastic MAB based mobility solutions \cite{Simsek:15,Capdevielle:12}. Note that $E_{t}(i,n)$ may include additional energy consumptions if the UE decides to switch from one SBS to another. The set of energy consumptions $\{E_{t}(i,n), n=1, \cdots, N; t=1, \cdots, T\}$ is unknown to UE $i$. We are interested in finding a SBS selection sequence $\{a_{i,t}, t=1, \cdots, T\}$ for UE $i$ that minimizes the total energy consumption over $T$ slots: $\sum_{t=1}^{T} E_t(i,a_{i,t}) $.

\section{BREW: a non-stochastic mobility management algorithm}
\label{sec:learning}

\subsection{Problem Formulation with Handover Cost}
\label{sec:mot}

We take a representative UE and drop the UE index $i$ from the notation. In the non-stochastic multi-armed bandit model, each arm $n$ corresponds to a SBS for which there exists an arbitrary sequence of energy consumptions up to time $T$ if the UE is served by this SBS. Let $a_t$ denote the SBS selected by the UE at time $t$. It is assumed that after time slot $t$, the UE only knows the energy consumptions $E_{1}(a_{1}), \cdots, E_{t}(a_{t})$ of the previously selected SBS $a_1, \cdots, a_t$. In other words, the UE does not gain any knowledge about the SBSs which it does not choose. Thus, the UE's mobility algorithm can be stated as selecting a sequence $a_1, \cdots, a_T$ where $a_t$ is a mapping from the previous actions\footnote{An SBS selection is also referred to as an \textit{action}.} and the corresponding observed energy consumptions from time $1$ to $t-1$ to the selection of a SBS at $t$. Note that the knowledge of past SBS activities can be practically enabled by leveraging the ``UE History Information'' element in the 3GPP LTE specs  \cite{3gpp.36.413}.

At each time slot $t=1, 2, \cdots, T$, the UE chooses SBS $a_t$ from $\mathcal{N}_{\mathrm{SBS}}$, and then observes an energy consumption $E_{t}(a_t)$ for data transmission, which is sent to the UE as feedback from the SBS. For an arbitrary sequence of energy consumptions $\{E_{t}(a_t)\}$ and for any $T>0$, we denote
\begin{equation}
E_{\mathbf{a},T,1} \doteq \sum_{t=1}^{T} E_{t}(a_t)
\end{equation}
as the total energy consumption without considering any handover cost, at time $T$ of policy $\mathbf{a}$. Note that $E_{\mathbf{a},T,1}$ captures the total energy consumption up to $T$,  corresponding to the service the UE receives from its (possibly varying) serving SBS. Clearly, $E_{\mathbf{a},T,1}$ depends on the arbitrary energy consumption sequences at each SBS as well as the UE actions. We refer to $E_{\mathbf{a},T,1}$ as the total \textit{service energy consumption}. 

In practice, switching from one SBS to another incurs additional cost, and frequent switching incurs large energy consumption that is not captured by the service energy consumption $E_{\mathbf{a},T,1}$. To address this issue, we explicitly add additional energy consumption whenever a handover occurs, and thus force the optimal solution to minimize frequent handovers. For simplicity, we assume that a homogeneous energy consumption $E_s \geq 0$ is incurred whenever a UE is handed over from one SBS to another. This cost includes all energy consumptions that are associated with handovers, such as sending additional overhead signals and forwarding UE packets. The total \textit{handover energy consumption} can be computed as
\begin{equation}
\label{eqn:HOcost}
E_{\mathbf{a},T,2} = E_s \sum_{n=1}^{N} \sum_{t=2}^{T} \mathds{1}_{\{a_t=n, a_{t-1}\neq n\}}
\end{equation}
where $\mathds{1}_A$ is the indicator function for event $A$. As opposed to the service energy consumption, the handover energy consumption only depends on the UE action $\mathbf{a}$.

Finally, the total energy consumption over $T$ slots with handover cost can be written as
\begin{equation}
\label{eqn:Wcost}
E_{\mathbf{a},T} = E_{\mathbf{a},T,1} + E_{\mathbf{a},T,2},%  \nonumber \\
%&=& \sum_{t=1}^{T} E_{a_t}(t) + E_s \sum_{n=1}^{N} \sum_{t=2}^{T} \mathds{1}_{\{a_t=n, a_{t-1}\neq n\}}
\end{equation}
and we are interested in finding a mobility management policy that minimizes $\mathbb{E}[E_{\mathbf{a},T}]$. It is worth noting that by including the handover cost (\ref{eqn:HOcost}) in the total energy consumption (\ref{eqn:Wcost}), a good handover algorithm not only has to balance the tradeoff between exploitation and exploration, but also needs to minimize the number of occurrences that the UE changes SBS associations. Hence, the FHO problem is implicitly solved when the UE total energy consumption is minimized.

\subsection{The BREW Algorithm}
\label{sec:BREW}

To simplify the analysis, we assume without loss of generality that $E_{\min}=0$, and both $E_{\max}$ and $E_s$ are normalized as $E_{\max} + E_s = 1$.  Thus, if we re-write the total energy consumption of selecting SBS $a$ at time slot $t$ as $\tilde{E}_t(a) := E_t(a) + E_s \mathds{1}_{\{a_t \neq a_{t-1}, t>1 \}} $, we have $\tilde{E}_a(t) \in [0, 1]$. We also assume that the energy consumption for SBS $a$ is arbitrary but oblivious. In practice, this assumption is valid when the service energy consumption $E_t(a)$ of UE selecting SBS $a$ at time $t$ only depends on the current state of SBS $a$, such as its user load and traffic load, minimum transmit power to satisfy UE's QoS, etc. In other words, we do not consider the case that the SBS intelligently manipulates its service energy consumption to counter the UE mobility policy it learns from the past.% UE decisions.

\begin{algorithm} %[H]
\DontPrintSemicolon
\SetKwInOut{Input}{Input}
\SetKwInOut{Output}{Output}
\SetKwInput{Init}{Initialize}
\Input{A non-increasing sequence $\{ \gamma_l \}_{l \in \mathbb{N} }$, $\tau \in \mathbb{N}_{+}$}
\Init{$p_a(l) = 1/N$ and $\hat{L}_{0}(a) = 0$ for all $a \in {\cal N}_{\text{SBS}}$}

 \While{$l \geq 1$}{
 	Select SBS $a(l)$ randomly according to the probabilities $\{p_a(l)\}$, $a \in {\cal N}_{\text{SBS}}$ \;
	Keep UE on SBS $a(l)$ for the next $\tau$ time slots: $(l-1) \tau + 1, \ldots, l \tau$ \;
	Observe total energy consumption $\{ \tilde{E}_{t}(a(l)) \}_{t=(l-1) \tau + 1}^{ l \tau}$, possibly including a one-time handover energy consumption $E_s$ at $(l-1) \tau + 1$\;
	Calculate the average energy consumption incurred in batch $l$:
		\begin{equation}
		\bar{E}_{t}(a(l)) = \frac{1}{\tau} \sum_{t=(l-1) \tau + 1}^{ l \tau}  \tilde{E}_{t}(a_t)
		     \label{eqn:alg1}
		\end{equation} %\;
	Calculate the estimated energy consumption of each $a \in {\cal N}_{\text{SBS}}$ in the batch
		\begin{equation}
		\hat{E}_{l}(a) = \frac{ \bar{E}_{t}(a(l)) }{ p_{a}(l)  } \mathds{1}_{\{a = a(l)\}}   \label{eqn:alg2}
		\end{equation}	
	Update the cumulative estimated energy consumption of each $a \in {\cal N}_{\text{SBS}}$
		\begin{equation}
		\hat{L}_{l}(a) = \hat{L}_{l-1}(a) + \hat{E}_{l}(a)   \label{eqn:alg3}
		\end{equation}
	For $a \in {\cal N}_{\text{SBS}}$ set
		\begin{equation}
		p_a(l+ 1) 
		= \frac{ \exp (-\gamma_l \hat{L}_{l}(a) )   }   
		{ \sum_{a' \in {\cal N}_{\text{SBS}}}  \exp (-\gamma_l \hat{L}_{l}(a') )  }    \label{eqn:alg4}   
		\end{equation}
	$l = l+1$ \;
 }
 \caption{The  BREW mobility management algorithm.}
 \label{alg:BREW}
\end{algorithm}

The proposed Batched Randomization with Exponential Weighting (BREW) solution is given in Algorithm \ref{alg:BREW}. As the name suggests, it is a batched extension of the exponential weighting algorithm such as the celebrated EXP3 \cite{auer2}. Note that in Algorithm \ref{alg:BREW}, the EXP3 component is an adaptation of the algorithm originally proposed in \cite{auer2}, which uses a slightly different weighing scheme and works with loss functions instead of reward functions \cite{bubeck2010jeux} to get rid of the uniform mixture term in the probabilistic decision rule of the EXP3 in \cite{auer2}. We highlight several key design considerations. Firstly, because the energy consumption of each SBS can be generated arbitrarily, it is easy to show that for any \textit{deterministic} mobility solution, there exist sequences of energy consumption that make the solution highly sub-optimal. In other words, no fixed algorithm can guarantee a small performance degradation against \textit{all} possible energy consumption sequences. Hence, for the non-stochastic mobility problem, we introduce \textit{randomization} in the proposed algorithm to avoid being stuck in a worst-case energy consumption. This is done by selecting SBS based on a probability distribution over $N$ SBSs. %Note that this approach is commonly used in non-stochastic bandit theory \cite{bubeck2010jeux}.

Subsequently, a natural question is what type of randomization one should introduce to achieve good energy consumption performance. We note that our mobility management problem can be viewed as a special case of \textit{optimal sequential decision for individual sequences} \cite{Merhav:02}, for which \textit{exponential weighting} is a fundamental tool. The proposed algorithm uses exponential weighting to construct and update the probability for choosing SBS, as shown in (\ref{eqn:alg4}).

Finally, in order to address the FHO problem and avoid incurring large accumulated handover energy consumption, we need to ``explore in bulk''. This is done by grouping time slots into batches and not switching within each batch. What separates the operations within a batch from outside is that the UE does not observe energy consumption on a per-slot basis and does not need to update the internal state. In general, BREW works as if the UE is unaware that a batch has happened as opposed to one time slot. At the end of the batch, though, the UE can receive a one-time energy consumption feedback, which is the average energy during the batch as shown in equation (\ref{eqn:alg1}). The choice of the batch length plays a critical role in the overall performance -- if it is too large, one may get the benefit of having little loss from the handover energy consumption, but also may stuck at a sub-optimal SBS for a long time, and vice versa.  The BREW algorithm uses a parameter $\tau$ that determines the batch length.

\subsection{Finite-Time Performance Analysis}
\label{sec:reg_BREW}

To evaluate the performance of the proposed BREW mobility solution, we adopt a \textit{regret} formulation that is commonly used in multi-armed bandit theory \cite{Bubeck:12}. Specifically, we compare the energy consumption of the BREW algorithm with a ``genie-aided'' solution where UE chooses the SBS which has the minimum total energy consumption over $T$ slots, i.e., chooses the best SBS with the minimum $E_{\mathbf{a},T,1}$ up to time $T$ and incurs no handover cost $E_{\mathbf{a},T,2} = 0$. Our goal is to characterize the energy consumption regret for any finite time $T$. The smaller this regret is, the better the solution. 

Formally, we define
\begin{equation}
\label{eqn:genie}
E_{\textup{best}} \doteq \min_{n \in {\cal N}_{\text{SBS}} }  \sum_{t=1}^{T} E_{t}(n)
\end{equation}
as the energy consumption of the single best SBS at time $T$. Then, the performance of any UE mobility solution $\mathbf{a}$ can be measured against the genie-aided optimal policy (\ref{eqn:genie}) in expectation. We formally define the regret of a UE mobility solution $\mathbf{a}$ as:
\begin{align}
R_{\mathbf{a}}(T) 
:= \sum_{t=1}^T \bb{E} \left[ E_t( a_t ) + E_s \mathds{1}_{\{a_t \neq a_{t-1}, t>1\}}\right] - E_{\textup{best}},
 %\bb{E} \left[ \min_{n \in {\cal N}_{\text{SBS}} } \sum_{t=1}^T E_t( n ) \right],  
\label{eqn:fixedregret}
\end{align}
which is the difference between the total energy consumption of the learning algorithm and the total energy consumption of the best fixed action by time $T$. Here $a_t$ denotes the action chosen by the UE at time slot $t$ and the expectation is taken over the randomization of the UE's algorithm.

Note that $R_{\mathbf{a}}(T)$ is a non-decreasing function of $T$. For any mobility algorithm to be able to learn effectively, $R_{\mathbf{a}}(T)$  has to grow \textit{sublinearly} with $T$. In this way, one has $\lim_{T \rightarrow \infty} R_{\mathbf{a}}(T)/T = 0$, indicating that asymptotically the algorithm has no performance loss against the genie-aided solution. For the BREW mobility solution, we have the following theorem that upper bounds its regret for any finite time $T$.

\begin{theorem} 
\label{thm:SwitchRegret} 
For a given time horizon $T$, when BREW (Algorithm \ref{alg:BREW}) runs with $\gamma_l = \sqrt{ (2 \log N)/ (l N) }$ and batch size $\tau = \lceil B_N T^{1/3} \rceil$, where $B_N = ( 4.5 N \log N )^{-1/3}$, its regret is bounded by
\begin{equation}
\label{eqn:BREWreg}
R_{\mathbf{a}}(T) \leq   2 B^{-1}_N  T^{2/3} + \left( B_N + B^{-2}_N  \right) T^{1/3} + 1.
\end{equation}
\end{theorem}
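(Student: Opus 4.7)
The plan is to split the regret into a bandit-learning part on the batch-averaged losses and a handover part that is automatically small because BREW holds its choice for an entire batch, then to invoke the standard EXP3 analysis on the batched game and finally tune $\tau$ to balance the two contributions.

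First, let $L=\lceil T/\tau\rceil$ be the number of batches. Because BREW never switches inside a batch, at most one handover occurs per batch, so
\begin{equation}
\bb{E}\!\left[E_s\sum_{t=2}^{T}\mathds{1}_{\{a_t\neq a_{t-1}\}}\right]\le L\,E_s\le L\le T/\tau+1.
\end{equation}
Hence it remains to bound the service-energy regret $\sum_{t=1}^{T}\bb{E}[E_t(a_t)]-E_{\textup{best}}$. Because BREW plays $a(l)$ for the whole $l$-th batch, this service sum equals $\tau\sum_{l=1}^{L}\bar{E}(a(l),l)$ up to a boundary contribution of at most $\tau$ coming from the (possibly partial) last batch, which is absorbed in the final low-order residual. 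Similarly, $E_{\textup{best}}\ge\tau\,\min_{a}\sum_{l=1}^{L}\bar{E}(a,l)-\tau$.

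Second, I would view the $l$-th batch as a single round of an adversarial $N$-armed bandit whose losses $\bar{E}(a,l)\in[0,1]$ are set in advance (the obliviousness assumption) and whose sampled estimator (\ref{eqn:alg2}) is conditionally unbiased with $\bb{E}[\hat E_l(a)^2\mid \mathcal{F}_{l-1}]\le\bar E(a,l)/p_a(l)\cdot\bar E(a,l)\le 1$. The exponential-weights update (\ref{eqn:alg4}) is exactly EXP3-with-losses, so the standard potential-function argument, telescoping $\Phi_l=-\gamma_l^{-1}\log\sum_{a}\exp(-\gamma_l\hat L_l(a))$ across batches and using the time-varying rate $\gamma_l=\sqrt{(2\log N)/(lN)}$, yields an anytime bound of the form
\begin{equation}
\bb{E}\!\left[\sum_{l=1}^{L}\bar E(a(l),l)\right]-\min_{a}\sum_{l=1}^{L}\bar E(a,l)\;\le\;c\sqrt{L N\log N},
\end{equation}
with an explicit constant $c$ coming out of the second-moment term in the Hoeffding-type bound on $\log\sum_a p_a(l)\exp(-\gamma_l\hat E_l(a))$.

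Third, I would translate back to the original time scale. Multiplying the batched regret by $\tau$ and adding the handover bound and the two $\tau$-sized boundary slacks gives
\begin{equation}
R_{\mathbf a}(T)\;\le\;c\,\tau\sqrt{L N\log N}\;+\;L\;+\;O(\tau)\;\le\;c\sqrt{T\tau N\log N}\;+\;T/\tau\;+\;\text{lower order}.
\end{equation}
Minimising the leading terms in $\tau$ gives $\tau^{\star}\propto T^{1/3}/(N\log N)^{1/3}$, which is exactly the form $\tau=\lceil B_N T^{1/3}\rceil$; the specific value $B_N=(4.5N\log N)^{-1/3}$ corresponds to the value of $c$ delivered by the batched EXP3 analysis. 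Plugging this $\tau$ back and collecting the ceiling terms from $L=\lceil T/\tau\rceil$ and $\tau=\lceil B_N T^{1/3}\rceil$ gives the leading $2B_N^{-1}T^{2/3}$ (service plus handover, each contributing $B_N^{-1}T^{2/3}$), the $B_NT^{1/3}$ from the partial-batch slack, the $B_N^{-2}T^{1/3}$ from the rounding of $L$, and a residual $+1$.

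The main obstacle I anticipate is bookkeeping the constant $c$ in the batched EXP3 bound tightly enough to match the $4.5$ inside $B_N$: one must carefully combine (i) the factor arising from $\bb{E}[\hat E_l(a)^2/p_a(l)]\le 1$, (ii) the doubling/anytime adjustment forced by the $l$-dependent $\gamma_l$, and (iii) the $\tau$ scaling used to convert batch-regret into time-regret, while ensuring that the partial-final-batch slack and the $\lceil\cdot\rceil$'s are charged only to the single additive $1$ at the end of (\ref{eqn:BREWreg}). Everything else is routine potential-function algebra.
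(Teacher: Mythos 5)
Your proposal follows essentially the same route as the paper: the paper combines the anytime EXP3 pseudo-regret bound $\sqrt{4.5\,L\,N\log N}$ on the batch scale with the black-box reduction of Arora, Dekel and Tewari for $m$-memory-bounded adversaries (switching cost being the $m=1$ case), which is exactly your hand-rolled decomposition $\tau R(T/\tau) + T/\tau + \tau$, and then substitutes $\tau=\lceil B_N T^{1/3}\rceil$ to get the stated constants. One small slip worth fixing: the conditional second moment is $\bb{E}[\hat E_l(a)^2\mid\mathcal{F}_{l-1}]=\bar E(a,l)^2/p_a(l)$, which is not bounded by $1$ in general; the quantity the standard potential-function argument actually controls is $\sum_{a}p_a(l)\,\bb{E}[\hat E_l(a)^2\mid\mathcal{F}_{l-1}]\le N$, which is where the factor $N$ under the square root comes from, so the bound you invoke remains correct.
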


\begin{proof}
See Appendix~\ref{apd:thm1}.
\end{proof}

Theorem \ref{thm:SwitchRegret} provides a sublinear regret bound for BREW that guarantees the long-term optimal performance. Moreover, the bound in Theorem \ref{thm:SwitchRegret} applies to any finite time $T$, and can be used to characterize how fast the algorithm converges to the optimal action. Specifically, the total energy consumption of the UE that uses the BREW algorithm will approach, on average,  the total energy consumption of the best fixed action $a \in {\cal N}_{\text{SBS}}$ at a rate no slower than $O(T^{-1/3})$. Although the BREW algorithm works for any non-increasing sequence of $\gamma_l$, the particular choice of $\gamma_l = \sqrt{ (2 \log N)/ (l N) }$ gives a guaranteed upper bound of the regret in (\ref{eqn:BREWreg}). 

Another important remark is that the choice of $\tau$ given in Theorem \ref{thm:SwitchRegret} is optimal in terms of the time order of the regret, which is $O(T^{2/3})$. It is shown in \cite{dekel2014bandits} that for any learning algorithm, there exists a loss sequence under which that learning algorithm suffers $\tilde{\Omega}(T^{2/3})$ regret. Hence, our choice of $\tau$ in Theorem \ref{thm:SwitchRegret} results in a regret upper bound that matches the lower bound of $O(T^{2/3})$, proving its optimality in terms of the time order. A smaller value for $\tau$ will make BREW incur a higher cost due to over-switching, while a larger value for $\tau$ will make the algorithm incur a higher cost due to staying too long on a suboptimal action. 

\subsection{Robustness Analysis}
\label{sec:extBREW}

The development of BREW for energy-efficient mobility management captures the basic characteristics of the user mobility model in UDN, particularly with the inclusion of handover cost as well as the non-stochastic nature of the approach. However, in a real-world deployment, robustness issues often arise, such as delayed or missing feedback of the energy consumption. Whether the BREW algorithm can handle these problems and its impact on the total energy consumption is essential for its practical deployment.

\subsubsection{Delayed Feedback}
\label{sec:delayed}

Delayed feedback constantly happens in practice. For example, in most cellular standards, sending the feedback to UE may be delayed because it has to wait for the frames that are dedicated to sending control and signalling packets.

The next theorem provides a regret bound  when the UE receives the information about the energy consumption with a delay of $d$ time slots. 

\begin{theorem} 
\label{thm:DelayRegret} 
Consider the UE-SBS mobility operation in Fig.~\ref{fig:sys} where a feedback at time slot $t$ is $d$-delayed energy consumption $E_{t-d+1}(i,n)$, $d \geq 1$.
For a given time horizon $T$, when BREW (Algorithm \ref{alg:BREW}) runs with $\gamma_l = \sqrt{ (2 \log N)/ (l N) }$ and batch size $\tau = \lceil B_N T^{1/3} \rceil$, where $B_N = ( 4.5 N \log N )^{-1/3}$, its regret is bounded by
\begin{equation}
R_{\mathbf{a}}(T) \leq  
(d+1) B^{-1}_N  T^{2/3} + \left( B_N + B^{-2}_N  \right) T^{1/3} + 1 . 
\label{eqn:DelayRegret}
\end{equation}
\end{theorem}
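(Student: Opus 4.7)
The plan is to show that $d$-delayed feedback forces BREW to wait $d-1$ additional time slots after each batch to collect all feedback, during which the UE remains on the previously selected SBS. This introduces an additive ``waiting'' penalty of at most $d-1$ regret per batch. Since BREW operates with approximately $L = T/\tau$ batches, the extra regret scales as $(d-1)L \le (d-1) B_N^{-1} T^{2/3}$, which exactly accounts for the change of the leading coefficient from $2$ in Theorem~\ref{thm:SwitchRegret} to $d+1$ in (\ref{eqn:DelayRegret}).

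Concretely, I would first describe the natural adaptation of BREW under delayed feedback: after each batch of $\tau$ slots ending at time $l\tau$, the algorithm cannot compute the batch average in (\ref{eqn:alg1}) until the last delayed observation arrives at time $l\tau + d - 1$. During the intervening $d-1$ slots, the UE simply remains on $a(l)$. Crucially, no handover is triggered during this interval, so the handover cost structure of Theorem~\ref{thm:SwitchRegret} is preserved. I would then decompose the total regret into a \emph{core} part, attributable to the $\tau$-slot batches in which the UE follows the EXP3 logic, and a \emph{waiting} part, attributable to the $d-1$ slot gaps between batches. The core part is bounded by the regret expression of Theorem~\ref{thm:SwitchRegret}, since the estimator $\hat{E}_l(a)$ in (\ref{eqn:alg2}) and the weight update in (\ref{eqn:alg4}) are statistically identical to the no-delay case; only the wall-clock instant at which they are carried out is postponed. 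The waiting part contributes at most $(d-1)L$ regret, obtained by bounding each of the $L(d-1)$ waiting slots by the per-slot energy range, which is at most $1$ under the normalization $E_{\max} + E_s = 1$.

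Finally, substituting $\tau = \lceil B_N T^{1/3}\rceil$ gives $L \le T/\tau \le B_N^{-1} T^{2/3}$, so the waiting contribution is at most $(d-1) B_N^{-1} T^{2/3}$. Adding this to the bound from Theorem~\ref{thm:SwitchRegret} yields the claimed bound $(d+1) B_N^{-1} T^{2/3} + (B_N + B_N^{-2}) T^{1/3} + 1$. The main obstacle will be rigorously establishing that the EXP3 analysis underlying Theorem~\ref{thm:SwitchRegret} continues to apply despite the modified timing, i.e.\ making precise that the delay merely postpones the updates without corrupting the unbiased-estimator property on which the exponential weighting regret bound rests. Secondary bookkeeping issues, such as the fact that the core active slots total only $L\tau \le T$ rather than exactly $T$, and that the benchmark $E_{\textup{best}}$ is taken over the full horizon $T$ while the core analysis is run on a subset of slots, will need to be handled carefully via the inequality $\sum_{t \in A} E_t(a^*_A) \le \sum_{t \in A} E_t(a^*)$ for the active-slot minimizer $a^*_A$, but require no essentially new ideas beyond those of Theorem~\ref{thm:SwitchRegret}.
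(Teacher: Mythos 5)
Your proof is essentially correct, but it takes a genuinely different route from the paper's. The paper's proof is a one-line extension of the argument for Theorem~\ref{thm:SwitchRegret}: it rewrites the per-slot loss under a $d$-slot feedback delay as $f_t(a_{t-d},\ldots,a_t) = E_{t-d}(a_{t-d}) + E_s\mathds{1}_{\{a_{t-1}\neq a_t\}}$, i.e.\ a loss assigned by a $d$-memory-bounded adaptive adversary, and then invokes Proposition~\ref{prop2} with $m=d$ in place of $m=1$; the coefficient $d+1$ arises there as $1$ (from $\tau R(T/\tau)$) plus $d$ (from the $Tm/\tau$ term). Your decomposition is instead $2+(d-1)$: you keep the $m=1$ reduction of Theorem~\ref{thm:SwitchRegret} for the ``core'' batched slots and separately charge at most one unit of regret to each of the $L(d-1)\le (d-1)B_N^{-1}T^{2/3}$ idle slots spent waiting for the last delayed observation of each batch. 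Both accountings land on the same bound. The tradeoffs: the paper's reduction requires no change to BREW's timing --- the contaminated feedback at the start of each batch is simply absorbed into the adversary's memory --- whereas your argument analyzes a slightly modified schedule in which the UE idles for $d-1$ slots between batches; this is a defensible reading of ``BREW runs with delayed feedback,'' since the theorem statement does not specify how the algorithm resynchronizes, but it should be stated explicitly as part of the analyzed algorithm. Conversely, your argument is self-contained and elementary, makes transparent why the $T^{1/3}$ and constant terms are unaffected by the delay, and your handling of the benchmark mismatch via $\sum_{t\in A}E_t(a^*_A)\le\sum_{t\in A}E_t(a^*)$ together with nonnegativity of the energies on the omitted slots is exactly the right fix for running the core analysis on a strict subset of the horizon.
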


\begin{proof}
See Appendix~\ref{apd:thm2}.
\end{proof}

The importance of Theorem \ref{thm:DelayRegret}  is that it preserves the order-optimality of the BREW algorithm in the presence of delayed feedback. In fact, comparing (\ref{eqn:DelayRegret}) to (\ref{eqn:BREWreg}), we can see that only the coefficient before $T^{2/3}$ is increased due to the delayed feedback. Moreover, Theorem \ref{thm:DelayRegret} holds when $d <  B_N T^{1/3}$, implying that as long as the delay is sublinear in time with a small enough time exponent, the regret can be guaranteed to be sublinear in time. Finally, we note that in practice, the delay of feedback may be time-varying, and Theorem \ref{thm:DelayRegret} can be applied to varying delays where $d$ is chosen as the maximum feedback delay.

\subsubsection{Missing Feedback}
\label{sec:missing}

Another practical issue with respect to UE observing the energy consumption is that such feedback from the SBS to UE may be entirely missing. This can happen when the downlink transmission that carriers the feedback information is not correctly received at the UE.

We take a probabilistic approach when considering the missing feedback problem. Specifically, we assume that at each time slot $t$, the energy consumption feedback may be missing with probability $p$, which can be set, e.g., as the packet loss rate for the transmission. 

Algorithm~\ref{alg:BREWmiss} is the modified BREW that can handle the missing feedback. The idea behind Algorithm~\ref{alg:BREWmiss} is to make the estimated energy consumption $\hat{E}_l(a)$ an \textit{unbiased} estimate of the actual average energy consumption in batch $l$. Therefore, we normalize this quantity by dividing it with the probability that $\zeta$ feedbacks are observed in batch $l$. With this approach, the contribution of rare events (unlikely feedback sequences) to the cumulative estimated energy consumption is magnified. This idea is widely used in the design of exponential weighing algorithms and their variants \cite{Merhav:02}.

\begin{algorithm} %[H]
\DontPrintSemicolon
\SetKwInOut{Input}{Input}
\SetKwInOut{Output}{Output}
\SetKwInput{Init}{Initialize}
\Input{A non-increasing sequence $\{ \gamma_l \}_{l \in \mathbb{N} }$, $\tau \in \mathbb{N}_{+}$; probability $p$ that feedback will be missing in each time slot.}
\Init{$p_a(1) = 1/N$ and $\hat{L}_{0}(a) = 0$ for all $a \in {\cal N}_{\text{SBS}}$}

 \While{$l \geq 1$}{
 	Select SBS $a(l)$ randomly according to the probabilities $\{p_a(l)\}$, $a \in {\cal N}_{\text{SBS}}$ \;	
	Keep UE on SBS $a(l)$ for the next $\tau$ time slots: $(l-1) \tau + 1, \ldots, l \tau$ \;
	Let ${\cal O}(l)$ be the set of time slots in $\{(l-1) \tau + 1, \ldots, l \tau \}$ for which the feedback is observed \;
	Observe total energy consumption $\{ \tilde{E}_{t}(a(l)) \}_{t \in {\cal O}(l)}$, possibly including a one-time handover energy consumption $E_s$ at $(l-1) \tau + 1$\;
	Calculate the average energy consumption incurred in batch $l$:
		\begin{equation}
		\bar{E}_{l} = \frac{1}{|{\cal O}(l)|} \sum_{ t \in {\cal O}(l) }  \tilde{E}_{t}( a(l) ) 
			     \label{eqn:algBREWmiss1}
		\end{equation} %\;
	Calculate the estimated energy consumption of each $a \in {\cal N}_{\text{SBS}}$ in the batch
		\begin{equation}
		\hat{E}_{l}(a) = \frac{ \bar{E}_{l} }{ p_{a}(l) {\tau \choose \zeta} (1-p)^{\zeta} p^{\tau - \zeta} } \mathds{1}_{\{a = a(l)\}} \mathds{1}_{ \{|{\cal O}(l)| = \zeta \}} 
			\label{eqn:algBREWmiss2}
		\end{equation}	
	Update the cumulative estimated energy consumption of each $a \in {\cal N}_{\text{SBS}}$
		\begin{equation}
		\hat{L}_{l}(a) = \hat{L}_{l-1}(a) + \hat{E}_{l}(a)   \label{eqn:algBREWmiss3}
		\end{equation}
	For $a \in {\cal N}_{\text{SBS}}$ set
		\begin{equation}
		p_a(l+ 1) 
		= \frac{ \exp (-\gamma_l \hat{L}_{l}(a) )   }   
		{ \sum_{a' \in {\cal N}_{\text{SBS}}}  \exp (-\gamma_l \hat{L}_{l}(a') )  }    \label{eqn:algBREWmiss4}   
		\end{equation}
	$l = l+1$ \;
 }
 \caption{The modified BREW algorithm for missing energy consumption feedback with probability $p$.}
 \label{alg:BREWmiss}
\end{algorithm}

Unfortunately, we are not able to prove a regret bound that grows sublinearly in time with respect to the best fixed action. Intuitively, for large $T$ there will be approximately $pT$ time slots where no feedback is received. Since we are dealing with the non-stochastic bandit problem, the worst-case loss for these slots can be linear with $T$ even when $T$ is large. This is a significant difference to the stochastic case where when $T$ is large, the concentration property can guarantee a sublinear loss in time. 

\subsection{Analyze the 3GPP Mobility Protocols Under the Online Learning Framework}
\label{sec:3gpp}

The proposed BREW mobility algorithm and its variations are developed under an online learning framework. In this section, we put the existing industry mobility mechanism under the same framework and characterize its regret performance through the lens of MAB.

The handover mechanism defined in 3GPP centers on the UE measuring the signal quality from candidate SBSs. Such measurements are generally filtered at the UE to rule out outliers. The original 3GPP handover protocol chooses the SBS with the highest signal quality to serve the UE, and sticks with the choice until some performance metric (such as RSRP or RSRQ when LTE is used \cite{SesiaLTE}) drops below a threshold, at which time the UE measures all candidate SBSs again and hands over to the best neighbor. 

The original protocol is optimization-based and has no restrictions on how frequent handovers can happen. Recognizing that FHO can happen when the network density is high, there have been some proposals in 3GPP to modify the handover parameters when frequent handovers are observed. The general principle is to first determine whether a FHO problem has happened, typically by counting the number of handovers within a sliding time window. If it is determined that there are too many handovers, mobility parameters such as hysteresis margin and time-to-trigger are modified to ``slow down'' future handovers, thus avoiding FHO and making the current serving SBS more ``sticky''.

The original handover mechanism and its variation can be viewed as a myopic rule that is both \textit{greedy} (always select the SBS that is immediately the best) and \textit{conservative} (only take actions when the selected SBS becomes bad enough). The following proposition shows the sub-optimality of these approaches. 
\begin{prop}
\label{prop:3gpp1}
Under both stochastic and non-stochastic MAB models for SBS energy consumption, the original and enhanced 3GPP handover protocols describe above achieve an asymptotic regret of $O(T)$.
\end{prop}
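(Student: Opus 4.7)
The plan is to exhibit, for each of the two models, explicit energy sequences on which the protocol's per-slot loss does not vanish, thereby forcing linear (i.e. $\Omega(T)$, hence $\Theta(T)$) regret. The key observation is that both protocols are deterministic, myopic, and driven by the instantaneous \emph{signal-quality} metric, not by the long-term \emph{energy consumption} objective of (\ref{eqn:fixedregret}); the enhanced variant only makes switching rarer and hence can only lock the UE into its current choice longer, so once I prove linear regret for the original protocol, the enhanced case follows by the same construction (the added stickiness never converts a bad arm into a good one).

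For the non-stochastic model I would construct a two-SBS instance on ${\cal N}_{\mathrm{SBS}}=\{1,2\}$ in which the signal-quality sequence $q_t(1)$ used by the protocol stays strictly above both the handover threshold and $q_t(2)$ for every $t$, yet the energy sequence satisfies $E_t(1)=E_{\max}$ and $E_t(2)=E_{\min}$ for all $t$. Because the measurement-based rule always ranks SBS $1$ first and never detects a violation of the threshold, the protocol selects $a_t=1$ for all $t$, so $E_{\mathbf{a},T,1}=T E_{\max}$, $E_{\mathbf{a},T,2}=0$, while $E_{\mathrm{best}}\le T E_{\min}$. Substituting into (\ref{eqn:fixedregret}) gives $R_{\mathbf{a}}(T)\ge (E_{\max}-E_{\min})T=\Omega(T)$. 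This construction is legitimate because, as emphasized in Section \ref{sec:intro}, the signal metric (RSRP/RSRQ) and the end-to-end energy cost (load, interference, QoS-driven transmit power, switching cost) are decoupled in UDN, so the worst-case loss sequence is allowed to pair a high-quality SBS with a high-cost one.

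For the stochastic model I would pick two Bernoulli-like energy/quality pairs with means $(\nu_1,\mu_1)$ and $(\nu_2,\mu_2)$ satisfying $\mu_1>\mu_2$ (so SBS $1$ wins the quality comparison) and $\nu_1>\nu_2$ (so SBS $2$ is the genuinely cheaper arm), with the gap $\mu_1-\mu_2$ large enough that with constant probability the filtered quality of SBS $1$ remains above both the threshold and that of SBS $2$ on every slot. A standard concentration argument (Hoeffding on the filter window) shows that after a finite burn-in the event ``protocol locks onto SBS $1$ forever'' has probability bounded below by a constant $c>0$ independent of $T$. On this event the per-slot expected excess cost is exactly $\nu_1-\nu_2>0$, giving $\mathbb{E}[R_{\mathbf{a}}(T)]\ge c(\nu_1-\nu_2)T-O(1)$.

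The main obstacle is the second case: making the ``locked forever'' event rigorously bounded away from zero for all $T$, which requires choosing the Bernoulli parameters and the filter/threshold together so that the quality of SBS $1$ is sufficiently concentrated above the threshold that no threshold crossing ever occurs with positive probability. Once this is handled, extending the argument to the enhanced protocol is immediate because its extra hysteresis and time-to-trigger modifications only enlarge the locked-in regime, and the same sequences (non-stochastic case) or the same event (stochastic case) produce the $\Omega(T)$ lower bound.
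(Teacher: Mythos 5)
Your overall strategy---exhibit explicit instances on which the protocol's per-slot excess cost does not vanish---is the same as the paper's, but both of your constructions differ in substance from the ones in Appendix~\ref{apd:3gpp1}. In the non-stochastic case the paper's adversary keeps the metric and the loss aligned and instead exploits the protocol's \emph{conservativeness}: it dips $R_2(1)$ below $\theta$ once to force a switch to SBS~2 and then keeps both SBSs above $\theta$ forever, so the UE can never return to the better SBS. Your construction instead exploits the \emph{decoupling} of the RSRP/RSRQ decision variable from the energy objective in (\ref{eqn:fixedregret}), locking the UE onto a high-quality, high-energy SBS. Both give $\Omega(T)$; yours is arguably the easier instance to write down, while the paper's is the stronger statement (the protocol fails even when the observed metric is the quantity being regretted), which is why the paper can claim linear regret ``even when $E_s=0$'' without appealing to a metric mismatch.

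The soft spot is your stochastic argument, and you have correctly identified it yourself. If the filtered quality of SBS~1 has any positive per-slot (or per-window) probability $\sigma_1>0$ of crossing the threshold, then the event ``locked onto SBS~1 forever'' has probability at most $(1-\sigma_1)^{\Theta(T)}\to 0$, so Hoeffding cannot give you a $T$-independent constant $c>0$; the only way your event survives all $T$ slots is to choose quality distributions whose support lies entirely above $\theta$, at which point the lock-in is deterministic and the concentration step is unnecessary. That degenerate choice does complete your proof, but you should state it as such. The paper avoids this issue entirely with a more robust argument: with $\sigma_1,\sigma_2>0$ the protocol induces a two-state Markov chain over $\{1,2\}$ whose stationary distribution puts mass $\rho_2>0$ on the suboptimal SBS, so a constant \emph{fraction} of time is spent on the wrong arm regardless of whether lock-in ever occurs, yielding regret $\rho_2(r_1-r_2)T$. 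If you want your stochastic case to cover non-degenerate quality distributions, you should switch to that ergodic-average argument rather than the lock-in event. Your observation that the enhanced (FHO-aware) protocol only prolongs the locked-in regime is correct and matches the paper's implicit treatment.
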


\begin{proof}
See Appendix~\ref{apd:3gpp1}.
\end{proof}

Proposition~\ref{prop:3gpp1} proves a linear regret with respect to time $T$ for the 3GPP handover solutions. Two important remarks are in place. Firstly, Proposition~\ref{prop:3gpp1} is a strong result in the sense that it  is proved for both stochastic and non-stochastic energy consumption, meaning that linear regret is inevitable regardless of the adopted model. Secondly, the sub-optimality is shown without considering the handover cost. In other words, existing industry mechanisms cannot converge to the best SBS even when $E_s =0$. A non-zero $E_s$ will further deteriorate the regret performance. Detailed numerical comparisons will be made in Section~\ref{sec:sim}.

\section{Dynamic SBS Presence}
\label{sec:onoff}

In this section we consider energy efficient mobility management for SBSs with dynamic presence. Notably, this is a new problem that arises with the increase of user-deployed SBSs. For both enterprise and residential small cell deployment, SBSs can be turned on and off by users, thus creating problems for mobility management. In particular, such on/off behaivor would disrupt the learning process. To capture this uncontrolled user behavior, we consider the following generic SBS on/off model. At each time slot $t$, a subset of SBSs, chosen \textit{arbitrarily}, can be turned off and hence cannot serve the UE. As we will see, this problem is significantly harder. There are known results in the literature of \textit{stochastic} multi-armed bandits with appearing and disappearing arms \cite{Chakrabarti:NIPS2008,Bnaya:13}, but the theoretical structure of these solutions are very different from the \textit{non-stochastic} problem in this paper. Consequently, we have to develop new results in the non-stochastic bandit theory and design robust mobility management solutions.

The set of SBSs available at time slot $t$ is denoted by ${\cal N}_t \subset {\cal N}_{ \text{SBS} }$.  An SBS in ${\cal N}_t$ is called an {\em active} SBS, while an SBS in ${\cal N}_{ \text{SBS} } - {\cal N}_t := \{ n: n \in {\cal N}_{ \text{SBS} }, n \notin {\cal N}_t  \}$ is called an {\em inactive} SBS. We require that the UE only selects from the active SBS set ${\cal N}_t$ at time slot $t$.

We first give an impossibility result regarding achieving the optimal performance asymptotically. Theorem \ref{thm:linearregret} shows that in general it is impossible to obtain sublinear regret when ${\cal N}_t$ changes in an arbitrary way.

\begin{theorem} 
\label{thm:linearregret}
Assume that $\{ {\cal N}_t \}_{t=1}^T$ is generated by an adaptive adversary which selects ${\cal N}_t$ based on the action chosen by the learning algorithm at time slot $t-1$, and 
$\{ E_t \}_{t=1}^T$ is generated by an oblivious adversary. Then, for $E_s \geq E_{\max} + 1/(N-1)$, no learning algorithm can guarantee sublinear regret in time. 
\end{theorem}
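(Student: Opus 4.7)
The plan is to invoke Yao's minimax principle: I will construct a distribution over problem instances under which every deterministic learning algorithm incurs expected regret linear in $T$, which then rules out sublinear-regret randomized algorithms against the worst-case (oblivious energy, adaptive availability) adversary pair described in the statement.

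For the oblivious energy adversary, I would draw a distinguished arm $n^*$ uniformly at random from ${\cal N}_{\text{SBS}}$ and set $E_t(n^*) = 0$ together with $E_t(n) = E_{\max}$ for every $n \neq n^*$ and every $t$. For the adaptive availability adversary I would commit to the ``kick-out-the-previous'' rule ${\cal N}_t := {\cal N}_{\text{SBS}} \setminus \{a_{t-1}\}$ for $t \geq 2$ (with ${\cal N}_1 = {\cal N}_{\text{SBS}}$). This rule guarantees $a_t \neq a_{t-1}$ at every round after the first, so the learner accrues a deterministic handover contribution of $E_s (T-1)$ regardless of its decisions, while the comparator in (\ref{eqn:fixedregret}) is the fixed action $n^*$ whose total service cost is $E_{\text{best}} = 0$.

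To finish, I would lower-bound the learner's expected service cost under this randomized instance. Because $n^*$ is uniform over ${\cal N}_{\text{SBS}}$ and $|{\cal N}_t| = N-1$, a symmetry argument shows that at any round before the learner has identified $n^*$, the probability of hitting it is at most $1/(N-1)$, so the expected per-round service cost is at least a constant fraction of $E_{\max}$; moreover the adversary's rule precludes ``staying'' on $n^*$ after identification, limiting the learner to alternation and leaving the expected service cost at least $E_{\max}/2$ per round. Combining the forced handover term with the service term and subtracting $E_{\text{best}} = 0$, the expected regret becomes linear in $T$, and the hypothesis $E_s \geq E_{\max} + 1/(N-1)$ is precisely the threshold that keeps the per-round lower bound strictly positive after accounting for the learner's best possible alternation strategy (the $1/(N-1)$ slack comes from the fact that the learner has only $N-1$ admissible actions per round and at best $1/(N-1)$ mass on $n^*$ during exploration).

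The main obstacle is cleanly separating the exploration and exploitation phases of a generic learner: I would handle this by conditioning on the first round at which the learner observes a zero energy (which deterministically reveals $n^*$) and using a stopping-time argument to bound its expectation by $O(N)$, so that the pre-identification rounds contribute at most $O(N)$ to the total expected cost and the linear-in-$T$ contribution is dominated by the forced handovers together with the post-identification alternation penalty. Once the deterministic lower bound is established under the randomized instance, applying Yao's principle transfers it to every randomized algorithm, completing the proof.
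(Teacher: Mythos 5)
There is a genuine gap in your comparator analysis. You claim that under the ``kick-out-the-previous'' rule the benchmark is the fixed action $n^*$ with $E_{\textup{best}}=0$, but the availability sets ${\cal N}_t$ are determined by the \emph{learner's} actions, so $n^*$ is unavailable at every slot following a slot in which the learner played $n^*$. A comparator committed to $n^*$ must then deviate on those slots, incurring service cost $E_{\max}$ and handover costs both on the way out and on the way back; its total cost is therefore not zero, and for a learner that alternates between $n^*$ and one other arm the cost of the ``play $n^*$ when available'' comparator is essentially equal to the learner's cost, so subtracting ``$E_{\textup{best}}=0$'' does not yield a linear lower bound. The conclusion can still be rescued, but only by switching to the argument the paper actually uses: since exactly one SBS is removed per slot, some SBS $\tilde{a}$ is active in at least $T(1-1/N)$ slots, so the best comparator's cost is at most $T(1-1/N)E_{\max}+T/N$, while the learner is forced to switch every slot and pays at least $E_s T$; the difference is $\frac{T}{N}\bigl(NE_s-(N-1)E_{\max}-1\bigr)\geq \frac{TE_{\max}}{N}$ under the stated condition on $E_s$. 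Note that this is where $E_s\geq E_{\max}+1/(N-1)$ enters, via an explicit computation rather than the unverified ``threshold'' claim you make.

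A secondary point: the Yao's-principle machinery, the hidden arm $n^*$, the exploration/identification stopping-time argument, and the ``$E_{\max}/2$ per round post-identification'' bound are all unnecessary (and the last of these is not actually justified). The lower bound here is not information-theoretic: it does not rely on the learner's ignorance of which arm is best, only on the fact that the adaptive availability adversary forces a handover at every slot while the benchmark needs to switch only $O(T/N)$ times. Your one correct and essential observation --- that the rule ${\cal N}_t={\cal N}_{\text{SBS}}\setminus\{a_{t-1}\}$ forces a deterministic handover charge of $E_s(T-1)$ --- is the entire engine of the paper's proof; the rest of your construction adds complexity without closing the argument.
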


\begin{proof}
See Appendix~\ref{apd:thm3}.
\end{proof}

In light of the impossibility result in Theorem \ref{thm:linearregret}, the pursue of good energy consumption performance in the presence of dynamic SBS on/off is only viable when the generation of ${\cal N}_t$ is constrained. In the following discussion, we will focus on an \textit{i.i.d. SBS activity model}, where SBS $a$ is present at time slot $t$ with probability $p_a$ independently from other time slots and other SBSs. We emphasize that the i.i.d.  activity model generally represents a case that is worse than practice, where the SBS on/off introduced by end-users has some memory. The correlation over time can be exploited by a learning algorithm to achieve better mobility. We focus on the i.i.d.  activity model because it presents a more challenging SBS dynamic, and the resulting algorithms and regret analysis can serve as a guideline to the real-world SBS on/off performance. The proposed algorithms for the i.i.d.  model can be applied to other SBS activity models, such as the Markov model. Furthermore, note that both i.i.d. and Markov models are widely used in stochastic multi-armed bandit, but in our paper they are used for modelling the SBS on/off activities, not the reward distribution. 

In order to address the SBS dynamics, we follow the general principle of \textit{prediction with expert advice} \cite{Cesa:06}. In this setting, we assume that there is a set of experts which recommend the SBS for the UE to select, based on the past sequence of selections and energy consumption feedback to the UE. There is no assumption on the way these experts compute their predictions, and the only information UE receives from these experts is the advice. Then, we will bound the regret of the UE with respect to the best of these experts.

\subsection{Ranking Expert}
\label{sec:ranking}

The first proposed algorithm utilizes a concept called ``ranking expert'' \cite{kleinberg2010sleeping} and the algorithm consists of two key elements. Firstly, we will need an efficient expert selection procedure that chooses the action based on all expert advices. In order to achieve low regret, a modified EXP4 procedure from \cite{auer2} is used to select the expert at each time slot. The second component is how to construct expert advice, where we use \textit{ranking} to sort the possible actions at each expert. The overall Ranking Expert (RE) algorithm is given in Algorithm~\ref{alg:RE}. 

\begin{algorithm} % [H]
\DontPrintSemicolon
\SetKwInput{Input}{Input}
\SetKwInput{Init}{Initialize}
\Input{A non-increasing sequence $\{ \gamma_t \}_{t \in \mathbb{N} }$; $\tilde{{\cal E}} = {\cal E} \cup \text{Unif}$ (see Definition~\ref{def:unif})}
\Init{$q_e(1) = 1/ |\tilde{{\cal E}} |$ and $\hat{L}_0(e) = 0$ for all $e \in \tilde{{\cal E}}$, $a_0 = \text{Rand}( {\cal N}_{\text{SBS}} )$}

 \While{$t \geq 1$}{
 	Observe $a_{t-1}$ and ${\cal A}_t$ \;
 	Get ranking expert advices $\{ \bs{\delta}^e(t) \}_{e \in \tilde{{\cal E}}}$\;
 	Set 
		\begin{equation}
			p_a(t) = \sum_{e \in \tilde{{\cal E}}} q_e(t) \delta^e_a(t)
			%p_a(t) = \sum_{e \in {\cal E}} q_e(t) \delta^e_a(t)    
		\end{equation}
	Select $a_t$ randomly according to the probabilities $p_a(t)$, $a \in {\cal N}_{\text{SBS}}$ \;
	Observe energy consumption $ \tilde{E}_{t}(a_t) \in [0,1]$ \;
	Set 
		%\[
		\begin{equation}
   		\hat{X}_{t}(a) = \left\{
                		\begin{array}{ll}
                  		\tilde{E}_{t}(a) / p_a(t) &\text{ if } a = a_t \\
                  		0 &\text{ otherwise} \\
                		\end{array}
              	\right. , \text{ for } a \in {\cal N}_{\text{SBS}}
		\end{equation}
 		 %\]
	Update the cumulative estimated energy consumption of each ranking expert $e \in \tilde{{\cal E}}$
		\begin{equation}
			\hat{L}_{t}(e) = \hat{L}_{t-1}(e)  +  \hat{X}_{t}(a_t) \delta^e_{a_t}(t)   
		\end{equation}
		
	For $a \in {\cal N}_{\text{SBS}}$ set
		\begin{equation}
		q_e(t+ 1)  = \frac{ \exp (-\gamma_t \hat{L}_{t}(e) )  }{ \sum_{e' \in \tilde{{\cal E}} }  \exp ( -\gamma_t \hat{L}_{t}(e') ) }  
		\end{equation}
	$t = t+1$ \;
 }
 \caption{The Ranking Expert (RE) mobility management algorithm.}
 \label{alg:RE}
\end{algorithm}

Let $\bs{\delta}^e = ( \delta^e_1, \ldots, \delta^e_N  )$ denote the action choice vector of expert $e$, where $\delta^e_a = 1$ denotes the event that expert $e$ recommends SBS $a$, and $\delta^e_a = 0$ denotes the event that expert $e$ does not recommend SBS $a$. It is assumed that each expert recommends only one SBS in ${\cal N}_{\text{SBS}}$. Since we have both handover cost and dynamic SBS activity, we consider experts whose recommendation strategy at time $t$ depends on both $a_{t-1}$ and ${\cal N}_t$. This is a critical step, because otherwise the handover energy consumption will not be considered by the pool of experts. We specifically focus on {\em ranking experts} whose preference over the set of SBSs is given by a previous action-dependent ranking. 

\begin{defi}
\label{def:3}
An expert is called a {\em ranking expert} if for each previous action $a \in {\cal N}_{\text{SBS}}$, expert $e$ has a ranking over ${\cal N}_{\text{SBS}}$ given by $\sigma_{e,a}$. Let ${\cal E}$ denote the set of all possible ranking experts, with size $N_E := |{\cal E}|$. 
\end{defi}

One benefit of considering ranking experts is that the ranking can be performed on the entire set of SBS ${\cal N}_{\text{SBS}}$, but the recommendation can take into account of the SBSs that are turned off, by removing them from the ordered set. Specifically, given ${\cal N}_t$, expert $e \in {\cal E}$ recommends the action with the highest rank in ${\cal N}_t$, which is denoted by $\sigma_{e,a}({\cal N}_t)$. 
 
Different from the definition of regret in (\ref{eqn:fixedregret}), which is with respect to the best fixed \textit{action}, we define the regret of the UE in this section with respect to the best fixed \textit{expert}. For expert $e \in {\cal E}$, let $a^e_t$ denote the action recommended at time $t$, and $a_t$ denote the random variable that represents the action chosen by the UE at time $t$. Since $e$ is a ranking expert, $a^e_t$ depends on ${\cal N}_t$ and $a^e_{t-1}$.
We define the regret with respect to the best expert from a pool of experts ${\cal E}$ as 
\begin{equation}
%\begin{aligned}
R_{\mathbf{a}} ({\cal E}, T) 
 :=   \sum_{t=1}^T \bb{E} \left[ E_t( a_t ) + E_s \mathds{1}_{\{a_t \neq a_{t-1}\} }\right]  
 -   \bb{E} \left[ \min_{e \in {\cal E} } \sum_{t=1}^T
\left[ E_t( a^e_t ) + E_s \mathds{1}_{ \{a^e_t \neq a^e_{t-1} \}} \right]  
\right]
%\end{aligned}
\end{equation}

Due to a technicality, we need to introduce a {\em uniform expert} (denoted by Unif) in order to bound the regret \cite{auer2}. It is  defined as the following.
\begin{defi}
\label{def:unif}
An expert is called a {\em uniform expert} ('Unif') if it recommends action $a \in {\cal N}_{\text{SBS}}$ with probability $1/N$, regardless of whether $a \in {\cal N}_{t}$.
\end{defi}
We denote the extended pool of experts which includes all the experts in ${\cal E}$ and the uniform expert as $\tilde{{\cal E}}$.  Instead of having a deterministic SBS selection rule like the other experts, the uniform expert selects its action at time slot $t$ according to the uniform distribution on ${\cal N}_{\text{SBS}}$ independently from past action and ${\cal N}_t$. Hence the uniform expert can recommend actions that are not in ${\cal N}_t$, which is now allowed. To address this issue, we assume that the UE randomly selects one of the actions in ${\cal N}_t$ if Algorithm~\ref{alg:RE} recommends an action that is not in ${\cal N}_t$. 

The theorem below gives a regret bound with respect to the best expert from the pool of experts defined above. 

\begin{theorem} 
\label{thm:ranking}
Assume that the UE uses the RE algorithm with the pool of ranking experts ${\cal E}$ defined in Definition~\ref{def:3} with $\gamma_t =  \sqrt{ \frac{ \log (1+N_E)}{ t N} }$.  We have 
\begin{equation}
R_{\mathbf{a}} ({\cal E}, T) \leq 2  N \sqrt{T N  \log N}.
%R_{\mathbf{a}} ({\cal E}, T) \leq 2  \sqrt{T (N+1) N \log N}.
\end{equation}
\end{theorem}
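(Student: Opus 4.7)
The plan is to interpret Algorithm~\ref{alg:RE} as a variant of EXP4 \cite{auer2} acting on the enlarged pool $\tilde{\mathcal{E}}=\mathcal{E}\cup\{\textup{Unif}\}$, and to carry out the standard exponential-weights-with-expert-advice potential-function argument conditional on the random availability sequence $\{\mathcal{N}_s\}_{s\leq T}$. Conditioning makes each expert's loss $\tilde{E}_t(a_t^e)\in[0,1]$ a deterministic sequence (since $E_{\max}+E_s=1$), so the standard non-stochastic bandit analysis applies and the outer expectation over $\{\mathcal{N}_s\}$ can be taken at the end without further difficulty. The uniform expert guarantees $p_a(t)\geq q_{\textup{Unif}}(t)/N>0$ for every $a\in\mathcal{N}_{\text{SBS}}$, so the importance-weighted estimator $\hat{X}_t(a)$ is well-defined and conditionally unbiased for $\tilde{E}_t(a)$; a short calculation then gives $\mathbb{E}_t[\hat{X}_t(a_t)\delta^e_{a_t}(t)]=\tilde{E}_t(a_t^e)$, so $\hat{L}_t(e)$ is an unbiased estimator of expert $e$'s cumulative loss.

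The core step sets $W_t=\sum_{e\in\tilde{\mathcal{E}}}\exp(-\gamma_t\hat{L}_t(e))$ and bounds $\log(W_t/W_{t-1})$ via $e^{-x}\leq 1-x+x^2/2$ for $x\geq 0$ together with the mixing identity $\sum_{e}q_e(t)\delta^e_a(t)=p_a(t)$. Summing the telescope and comparing with $\log W_T\geq-\gamma_T\hat{L}_T(e^*)-\log|\tilde{\mathcal{E}}|$ for any fixed $e^*\in\mathcal{E}$ yields
\begin{equation*}
\sum_{t=1}^T\sum_{a\in\mathcal{N}_{\text{SBS}}}p_a(t)\hat{X}_t(a)-\hat{L}_T(e^*)\leq\frac{\log|\tilde{\mathcal{E}}|}{\gamma_T}+\frac{1}{2}\sum_{t=1}^T\gamma_t\sum_{a}p_a(t)\hat{X}_t(a)^2.
\end{equation*}
Noting that $\sum_a p_a(t)\hat{X}_t(a)=\tilde{E}_t(a_t)$ almost surely, taking total expectations reduces the LHS to $\sum_t\mathbb{E}[\tilde{E}_t(a_t)-\tilde{E}_t(a_t^{e^*})]$, while the second-moment term satisfies $\mathbb{E}[\sum_a p_a(t)\hat{X}_t(a)^2]=\sum_a\tilde{E}_t(a)^2\leq N$. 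Enumerating ranking experts gives $N_E=(N!)^N$ and therefore $\log(1+N_E)\leq N\log N!+1\leq N^2\log N$ for $N\geq 2$. Plugging $\gamma_t=\sqrt{\log(1+N_E)/(tN)}$ and using $\sum_{t=1}^T 1/\sqrt{t}\leq 2\sqrt{T}$ balances the two terms to $2\sqrt{TN\log(1+N_E)}\leq 2N\sqrt{TN\log N}$, which is the claim after minimizing over $e^*\in\mathcal{E}$.

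The main obstacle is reconciling the handover component of the observed loss with the regret target: the algorithm observes $\tilde{E}_t(a_t)$ containing the indicator $\mathds{1}_{\{a_t\neq a_{t-1}\}}$ built from its own past, whereas the regret charges each expert for $\mathds{1}_{\{a_t^e\neq a_{t-1}^e\}}$ along its own trajectory. The fix is to update $\hat{L}_t(e)$ using the expert's counterfactual handover indicator, which is deterministic given the observed $\{\mathcal{N}_s\}_{s<t}$ and expert $e$'s ranking rule; after this correction the estimator is unbiased for the expert's true cumulative loss and the potential-function bound is unchanged. Two secondary technicalities remain: the time-varying $\gamma_t$ is handled by the monotone-$\gamma$ lemma of \cite{Cesa:06} within each telescoping step, and the uniform expert's occasional recommendation of an inactive action is absorbed into the analysis because the substitution step in Algorithm~\ref{alg:RE} only perturbs $p_a(t)$ on a small weight slice.
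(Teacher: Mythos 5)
Your proof is correct in substance and follows essentially the same route as the paper: the paper's own proof simply invokes the anytime EXP4 regret bound (Theorem 4.2 of \cite{Bubeck:12}) for an expert pool of size $(N!)^N+1$ and finishes with the computation $\log((N!)^N+1)\leq N\log(N!+1)\leq N^2\log N$; you unpack that black box with the standard potential-function argument and arrive at the identical chain $2\sqrt{TN\log(1+N_E)}\leq 2N\sqrt{TN\log N}$. The one place you genuinely go beyond the paper is the handover-cost mismatch. The paper disposes of it in one line by asserting that, since SBS activity evolves independently of the UE's actions, the adversary is oblivious --- which glosses over the fact that the switching penalty $E_s\mathds{1}_{\{a_t\neq a_{t-1}\}}$ makes the effective loss depend on the previous action, and that the comparator charges each expert along its own trajectory via $\mathds{1}_{\{a^e_t\neq a^e_{t-1}\}}$. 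Your proposed fix (feeding each expert's counterfactual handover indicator, computable from $\{\mathcal{N}_s\}_{s\leq t}$ and the ranking rule, into $\hat{L}_t(e)$) does make the estimator unbiased for the expert's true cumulative loss, but be aware that it amends Algorithm~\ref{alg:RE} as written, whose update uses the observed $\tilde{E}_t(a_t)$ containing the learner's own indicator; strictly speaking your argument therefore proves the theorem for a slightly modified algorithm. Since the paper's proof does not resolve this point either, I would not count it as a gap in your reasoning, but you should state explicitly that the loss estimator is being redefined rather than presenting it as a property of the algorithm as given.
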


\begin{proof}
See Appendix~\ref{apd:thm4}.
\end{proof}

We have the following two remarks regarding the RE algorithm and its regret analysis for dynamic SBS on/off. Firstly, when the SBS can be turned on and off, the definition of the regret with respect to the best fixed SBS is no longer a strong definition, as any fixed SBS can be off for some slots. This is what motivated the regret definition with respect to the best fixed expert. Second, the ranking expert approach can simultaneously take care of handover cost (by letting the expert consider the previous action) and SBS on/off (by letting the expert recommend ranked SBS that is not off). 

\subsection{Contextual Ranking Expert}
\label{sec:contextual}

Although the RE algorithm defined in Algorithm~\ref{alg:RE} achieves sublinear regret with respect to $T$ and the regret bound given in Theorem \ref{thm:ranking} depends logarithmically on the number of ranking experts, one significant drawback is that maximum number of experts equals $N_E=(N!)^N$, which makes  practical implementations computationally challenging since the algorithm needs to keep a probability distribution over the set of experts, which is very large even for moderate $N$.

To address this practical issue, we re-visit the definition of ranking expert and  reduce the number of experts by \textit{defining ranking experts in the context of previous actions}. Consider the following contextual experts problem. Let ${\cal X}$ denote the context space. For a sequence of contexts $x_1, x_2, \ldots, x_T$, let $\tau_x \subset \{1, \ldots, T \}$ denote the set of time slots for which the context is $x \in {\cal X}$.  For the mobility management problem, we take the context at time $t$ to be the last action selected by the learning algorithm, i.e., $x_t = a_{t-1}$.\footnote{Note that in this definition the context is endogenously defined, i.e., it depends on the actions selected by the learning algorithm.}
Based on this, the contextual regret of the learning algorithm that works on a set of experts ${\cal E}$ is defined as  
\begin{equation}
%\begin{aligned}
R_{C}({\cal E}, T)  := 
 \sum_{b \in {\cal N}_\text{SBS} }   \bb{E} 
\Big[
 \sum_{t \in \tau_b }  \Big[  E_t( a_t ) + E_s \mathds{1}_{\{a_t \neq b \}} \Big] 
 - 
 \min_{e \in {\cal E} }  \sum_{t \in \tau_b } \Big[ E_t( a^e_t(b) ) + E_s \mathds{1}_{\{ a^e_t(b) \neq b\}}   \Big]  
\Big] 
%\end{aligned}
\end{equation}
where $a^e_t(b)$ denotes the action chosen by expert $e$ at time $t$ based on context $b$ and the set of available actions ${\cal N}_t$, and the expectation is taken with respect to the randomization of the learning algorithm.

With the introduction of contextual experts, we can reduce the number of ranking experts exponentially by using a variant of the RE algorithm. Formally, we define the set of contextual ranking experts, which in contrast to the set of experts given in Definition \ref{def:3}, do not take into account the previous action when ranking the actions. 

\begin{defi}
\label{def:CEXP4}
An expert $e$ is called a {\em basic ranking expert} if it has a ranking over ${\cal N}_{\text{SBS}}$ given by $\sigma_{e}$. Let ${\cal E}$ denote the set of all possible basic ranking experts. 
\end{defi}

\begin{algorithm} % [H]
\DontPrintSemicolon
\SetKwInput{Input}{Input}
\SetKwInput{Init}{Initialize}
\Input{A non-increasing sequence $\{ \gamma_t \}_{t \in \mathbb{N} }$; $\tilde{{\cal E}} = {\cal E} \cup \text{Unif}$}
\Init{ $q_{e,x}(1) = 1/ |\tilde{{\cal E}}| $, $\kappa(x) = 1$ and $\hat{L}_{0,x}(e) = 0$ for all $e \in \tilde{{\cal E}}$, $x \in {\cal N}_\text{SBS}$, $a_0 = \text{Rand}( {\cal N}_{\text{SBS}} )$}

 \While{$t \geq 1$}{
 	Observe $x_t = a_{t-1}$ and ${\cal A}_t$ \;
 	Get ranking expert advices $\{ \bs{\delta}^e(t) \}_{e \in \tilde{{\cal E}}}$ \;
 	Set 
		\begin{equation}
			%p_a(t) = \sum_{e \in \tilde{{\cal E}}} q_e(t) \delta^e_a(t)
			p_a(t) = \sum_{e \in \tilde{{\cal E}}} q_{e,x_t}( \kappa(x_t)  ) \delta^e_a(t)   
		\end{equation}
	Select $a_t$ randomly according to the probabilities $p_a(t)$, $a \in {\cal N}_{\text{SBS}}$ \;
	Observe energy consumption $ \tilde{E}_{t}(a_t) \in [0,1]$ \;
	Set 
		%\[
		\begin{equation}
   		\hat{X}_{t}(a) = \left\{
                		\begin{array}{ll}
                  		\tilde{E}_{t}(a) / p_a(t) &\text{ if } a = a_t \\
                  		0 &\text{ otherwise} \\
                		\end{array}
              	\right. , \text{ for } a \in {\cal N}_{\text{SBS}}
		\end{equation}
 		 %\]
	Update the cumulative estimated energy consumption of each ranking expert $e \in {\cal E}$ for context $x_t$
		\begin{equation}
			%\hat{L}_{t}(e) = \hat{L}_{t-1}(e)  +  \hat{X}_{t}(a_t) \delta^e_{a_t}(t)  
			\hat{L}_{\kappa(x_t),x_t}(e) = \hat{L}_{\kappa(x_t)-1,x_t}(e) +  \hat{X}_{t}(a_t) \delta^e_{a_t}(t) 
		\end{equation}
	For $a \in {\cal N}_{\text{SBS}}$ set
		\begin{equation}
		%q_e(t+ 1)  = \frac{ \exp (-\gamma_t \hat{L}_{t}(e) )  }{ \sum_{e' \in \tilde{{\cal E}} }  \exp ( -\gamma_t \hat{L}_{t}(e') ) }  
		q_{e,x_t}( \kappa(x_t) + 1)  = \frac{ \exp (-\gamma_{\kappa(x_t) } \hat{L}_{\kappa(x_t) ,x_t}(e) ) }  { \sum_{e' \in {\cal E} }  \exp ( -\gamma_{\kappa(x_t) } \hat{L}_{\kappa(x_t) ,x_t}(e') )  } 
		\end{equation}
	Set $\kappa(x_t) = \kappa(x_t) + 1$ \;
	$t = t+1$ \;
 }
 \caption{The Contextual Ranking Expert (CRE) mobility management algorithm.}
 \label{alg:CRE}
\end{algorithm}

We are now in the position to propose a Contextual Ranking Expert (CRE) algorithm, which is given in Algorithm~\ref{alg:CRE}. The CRE algorithm uses the last action as the {\em context} and learns the best expert independently for each context. CRE runs a different instance of ranking experts for each context. It keeps a different probability vector over the set of experts and actions for each $x \in {\cal X}$, and updates these probability vectors only when the corresponding context is observed. The parameter $\kappa(x)$ counts the number of times context $x$ has occurred up to the current time.  Instead of $t$, $\kappa(x)$ is used to adjust the learning rate of each ranking expert that runs for different contexts. This way, each RE algorithm is guaranteed to achieve sublinear regret with respect to the best expert for its context.

The following theorem bounds the contextual regret of Algorithm~\ref{alg:CRE}.
\begin{theorem} 
\label{thm:CRE}
Assume that the UE uses the CRE algorithm with the pool of ranking experts ${\cal E}$ given in Definition \ref{def:CEXP4} with $\gamma_t =  \sqrt{ \frac{ \log ( N! +1) }{ t N} }$.  Then we have 
\begin{equation}
R_{\mathbf{a}}({\cal E}, T) \leq 2  N^2 \sqrt{T \log N} . 
\end{equation}
\end{theorem}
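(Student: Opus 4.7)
The plan is to exploit the fact that Algorithm \ref{alg:CRE} maintains one independent ranking-expert subsystem per context $b \in {\cal N}_{\text{SBS}}$, so the contextual regret naturally splits into $N$ pieces, each of which is a self-contained instance of the RE problem analyzed in Section \ref{sec:ranking}. Concretely, because the context is defined by $x_t = a_{t-1}$, the sub-sequence $\tau_b = \{ t : a_{t-1} = b \}$ is exactly the set of rounds on which the algorithm's switching indicator $\mathds{1}_{\{a_t \neq a_{t-1}\}}$ collapses to $\mathds{1}_{\{a_t \neq b\}}$, which is precisely the expert-side switching term $\mathds{1}_{\{a_t^e(b) \neq b\}}$ used in $R_{C}({\cal E},T)$. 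This bookkeeping gives the clean decomposition
\begin{equation*}
R_{\mathbf{a}}({\cal E}, T) = \sum_{b \in {\cal N}_{\text{SBS}}} R_b, \qquad R_b := \bb{E}\Big[ \sum_{t \in \tau_b} \big( E_t(a_t) + E_s \mathds{1}_{\{a_t \neq b\}} \big) - \min_{e \in {\cal E}} \sum_{t \in \tau_b} \big( E_t(a_t^e(b)) + E_s \mathds{1}_{\{ a_t^e(b) \neq b\}} \big) \Big].
\end{equation*}

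Next, I would observe that restricted to $\tau_b$, Algorithm \ref{alg:CRE} is exactly the RE algorithm of Section \ref{sec:ranking} driven by the local counter $\kappa(b)$ and learning rate $\gamma_{\kappa(b)}$, acting on the basic ranking expert pool (augmented by the uniform expert). Hence Theorem \ref{thm:ranking} applies to each sub-sequence with $T_b := |\tau_b|$ playing the role of the horizon, giving the pathwise bound $R_b \leq 2 N \sqrt{T_b \, N \log N}$. Summing over contexts and using Cauchy–Schwarz (equivalently, Jensen's inequality for the concave square-root), together with $\sum_b T_b = T$ and $|{\cal N}_{\text{SBS}}| = N$, yields
\begin{equation*}
R_{\mathbf{a}}({\cal E}, T) \;\leq\; 2 N \sqrt{N \log N} \sum_{b \in {\cal N}_{\text{SBS}}} \sqrt{T_b} \;\leq\; 2 N \sqrt{N \log N} \cdot \sqrt{N \, T} \;=\; 2 N^2 \sqrt{T \log N},
\end{equation*}
which is the claimed bound.

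The main obstacle I foresee is justifying the per-context invocation of Theorem \ref{thm:ranking} despite the \emph{endogeneity} of the contexts: both $\tau_b$ and $T_b$ are random variables shaped by the algorithm's own randomization, and the slots in $\tau_b$ are interleaved with slots from other contexts. The resolution is that the loss sequence is oblivious, so conditioning on any realization of the algorithm's past freezes $\tau_b$ and exposes a well-defined arbitrary loss sub-sequence of length $T_b$; Theorem \ref{thm:ranking} holds pathwise on that sub-sequence, and the bound survives taking expectations. A secondary subtlety is that the RE analysis in Theorem \ref{thm:ranking} relies on the uniform expert to handle rounds where the recommendation falls outside ${\cal N}_t$; this device carries over to each context bucket unchanged because the i.i.d.\ availability process is unaffected by the context decomposition, so no new argument is needed there.
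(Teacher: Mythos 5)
Your proof is correct and follows the same skeleton as the paper's: decompose the contextual regret over the $N$ context buckets $\tau_b$, apply a per-context exponential-weighting (EXP4-type) expert regret bound that holds pathwise for any realization of $\tau_b$ because the loss sequence is oblivious, and sum over contexts. Two details differ, and they happen to cancel. First, you invoke Theorem~\ref{thm:ranking} per context, but that theorem is stated for the \emph{full} ranking-expert pool of Definition~\ref{def:3} with $(N!)^N$ experts and the learning rate tuned to that pool; each context instance of CRE instead runs over the $N!+1$ basic ranking experts of Definition~\ref{def:CEXP4} with $\gamma_t=\sqrt{\log(N!+1)/(tN)}$, and the correct per-context bound is $2\sqrt{|\tau_b|\,N\log(N!+1)}\leq 2N\sqrt{|\tau_b|\log N}$ --- smaller than your $2N\sqrt{|\tau_b|\,N\log N}$ by a factor $\sqrt{N}$. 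Your step is still sound as an inequality (you are over-estimating the per-context regret), but the citation is imprecise, and it obscures the whole point of CRE, namely the exponential reduction of the expert pool. Second, you then recover that lost $\sqrt{N}$ with the Cauchy--Schwarz aggregation $\sum_b\sqrt{|\tau_b|}\leq\sqrt{NT}$, whereas the paper uses the tighter per-context constant together with the crude bound $|\tau_b|\leq T$ and multiplies by $N$. Both routes land on $2N^2\sqrt{T\log N}$; if you combine the tighter per-context bound with your Cauchy--Schwarz step you would in fact obtain the slightly better $2N^{3/2}\sqrt{T\log N}$. Your treatment of the endogeneity of $\tau_b$ (the bound is realization-independent, hence survives the expectation) is exactly the paper's argument.
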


\begin{proof}
See Appendix~\ref{apd:thm5}.
\end{proof}

\section{Simulation Results}
\label{sec:sim}

\begin{table*}[h]
\caption{Simulation parameters}
\label{tab:sim_param}
\centering
\begin{tabular}{|l|l||l|l|} % 4 columns
\hline
\textbf{Parameters} &  \textbf{Value}  & \textbf{Parameters} &  \textbf{Value}  \\ \hline
$N$ &  6, 12   & Pathloss model & 3GPP in-to-out \cite{3gpp.36.814} \\ \hline
$M$ &  6  &  Shadowing & log-normal with 5dB  variance  \\ \hline
Maximum UEs per SBS & 3  &  SBS transmit power & 15dBm \\ \hline
Energy threshold for 3GPP-macro & 10\%  &  FHO count threshold for 3GPP-FHO & 4 out of 20 \\ \hline
Thermal noise density  & -174dBm/Hz  &  UE noise figure  & 5.5dB \\ \hline
Carrier frequency  & 2.1GHz & Bandwidth & 20MHz \\ \hline
Penetration loss ($L_{ow}$)  & 10dB    &  $d_0$  & 1m \\ \hline
\end{tabular}
\end{table*}

In order to verify the proposed mobility management design, we resort to numerical simulations. In particular, a system-level simulator is developed in which the geometry of UE/SBS and the UE movement are explicitly modelled. Our simulator adopts the general urban deployment model in \cite{3gpp.36.814}. The simulation setting is created to highlight the FHO problem for stationary or slow-moving user, which is the focus of our paper. Specifically, we assume that there is a house of size $14 \times 14$ square meters in the middle of the simulated area, and $N$ SBSs are symmetrically placed around the room. Note that the symmetrical layout is made to speed up the simulations as well as to create a more severe FHO environment. The distance from each SBS to the center of the house is 80 meters. SBSs are transmitting at a fixed power of 15dBm. On average, there are $M$ UEs in the room, and we adopt a simple random waypoint mobility model \cite{Camp:02} with low speed to address user mobility. In particular, we trace the slow movement of one particular UE (the UE of interest), while allowing other UEs to randomly leave or enter the network, and move around with different serving SBSs. As a result, the UE of interest will see dynamic energy consumption from its varying serving SBSs.  The total energy consumption is normalized. We consider the 3GPP pathloss model that is recommended for system simulations of small cells and heterogeneous networks. Particularly, we consider the pathloss model suggested in \cite{3gpp.36.814}:
\begin{equation}
\label{eq:PLmodel}
PL(d) [dB] = 15.3 + 37.6 \times \log_{10}(d) + L_{ow}, d>d_0.
\end{equation}
Some other system simulation parameters are summarized in Table~\ref{tab:sim_param}.

We first study the energy consumption performance of the proposed BREW algorithm and compare with the existing 3GPP solutions described in Section~\ref{sec:3gpp}. In particular, we include both the original threshold-based handover rule and the enhanced FHO-aware policy, labeled as \textit{3GPP-macro} and \textit{3GPP-FHO}, respectively, in the plots. The SINR threshold for 3GPP-macro is set such that the corresponding normalized average energy consumption is above 10\%, with no additional offset. The enhanced FHO-aware solution adopts a freezing period that is of the same length as the BREW batch length for fair comparison. Furthermore, the threshold is set to be 4 handovers over the past 20 slots. 

\begin{figure}[htb]
    \centering
    \centerline{\includegraphics[width=0.8\textwidth]{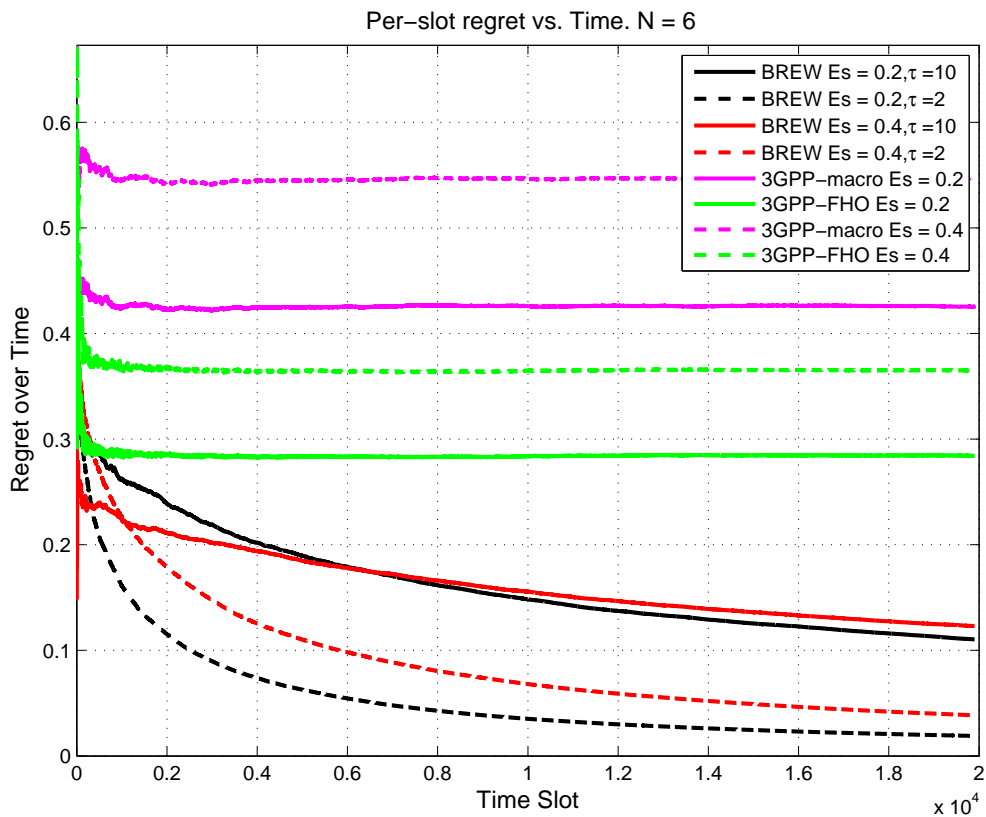}}
    \caption{Comparison of the per-time-slot energy consumption loss versus time for BREW, 3GPP-macro and 3GPP-FHO. $N=6$.}
    \label{fig:normal1}
\end{figure}

\begin{figure}[htb]
    \centering
    \centerline{\includegraphics[width=0.8\textwidth]{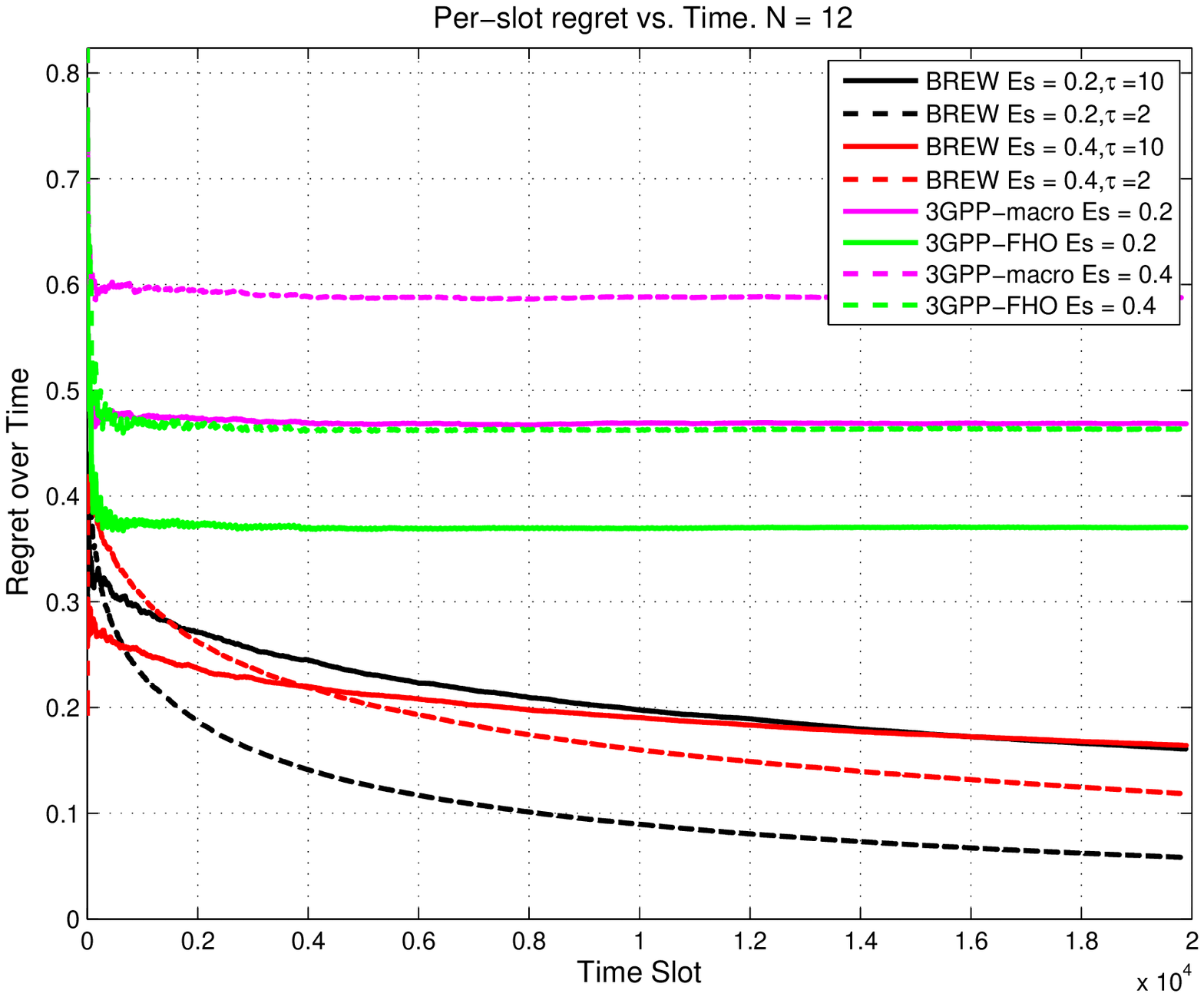}}
    \caption{Comparison of the per-time-slot energy consumption loss versus time for BREW, 3GPP-macro and 3GPP-FHO. $N=12$.}
    \label{fig:normal2}
\end{figure}

%%%%
\begin{figure*}
\centering
\subfigure[$N = 6$, $E_s = 0.2$]{%
\includegraphics[width=0.45\textwidth]{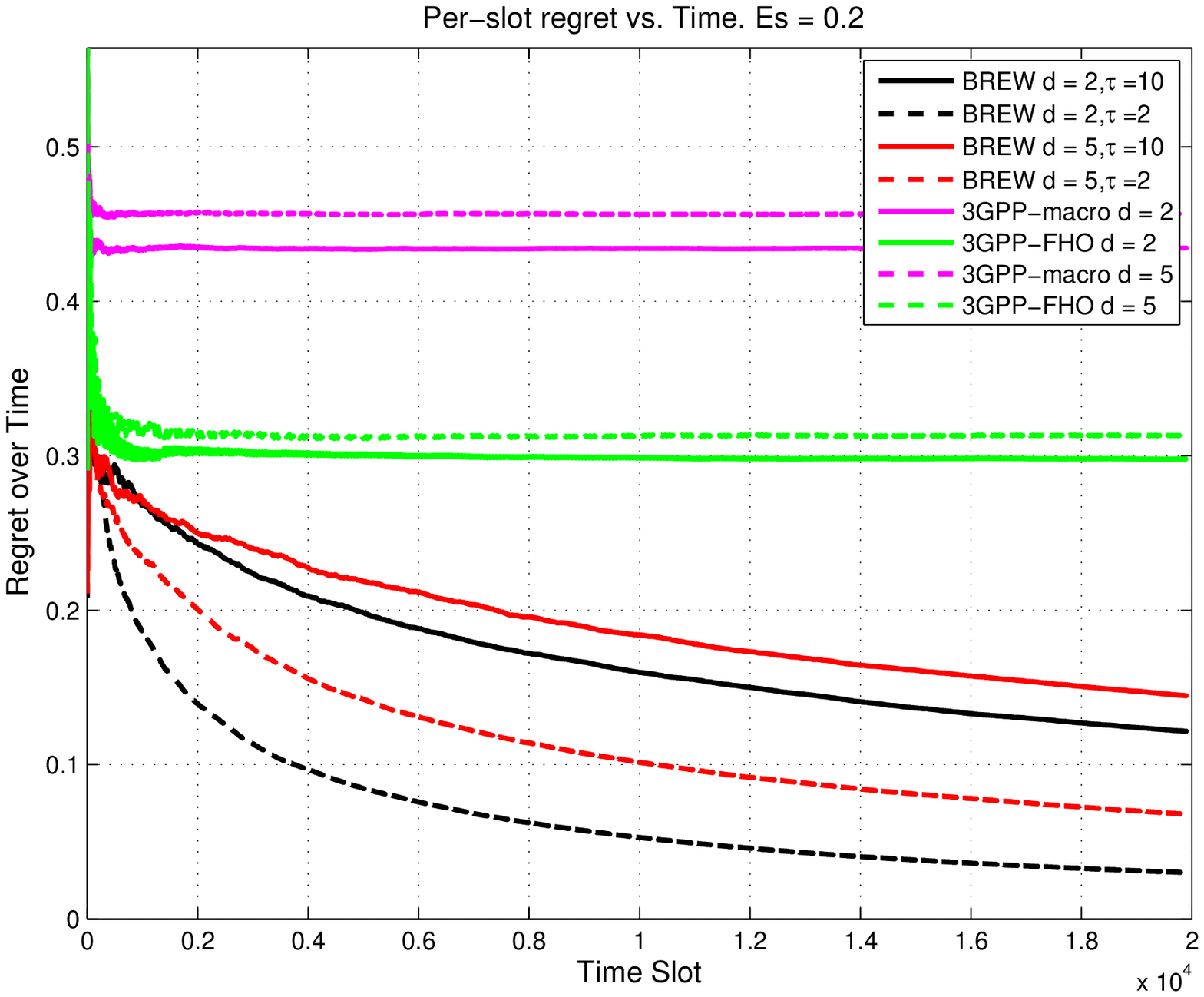}
\label{fig:simDelayN6Es2}}
%\quad
\subfigure[$N = 6$, $E_s = 0.4$]{%
\includegraphics[width=0.45\textwidth]{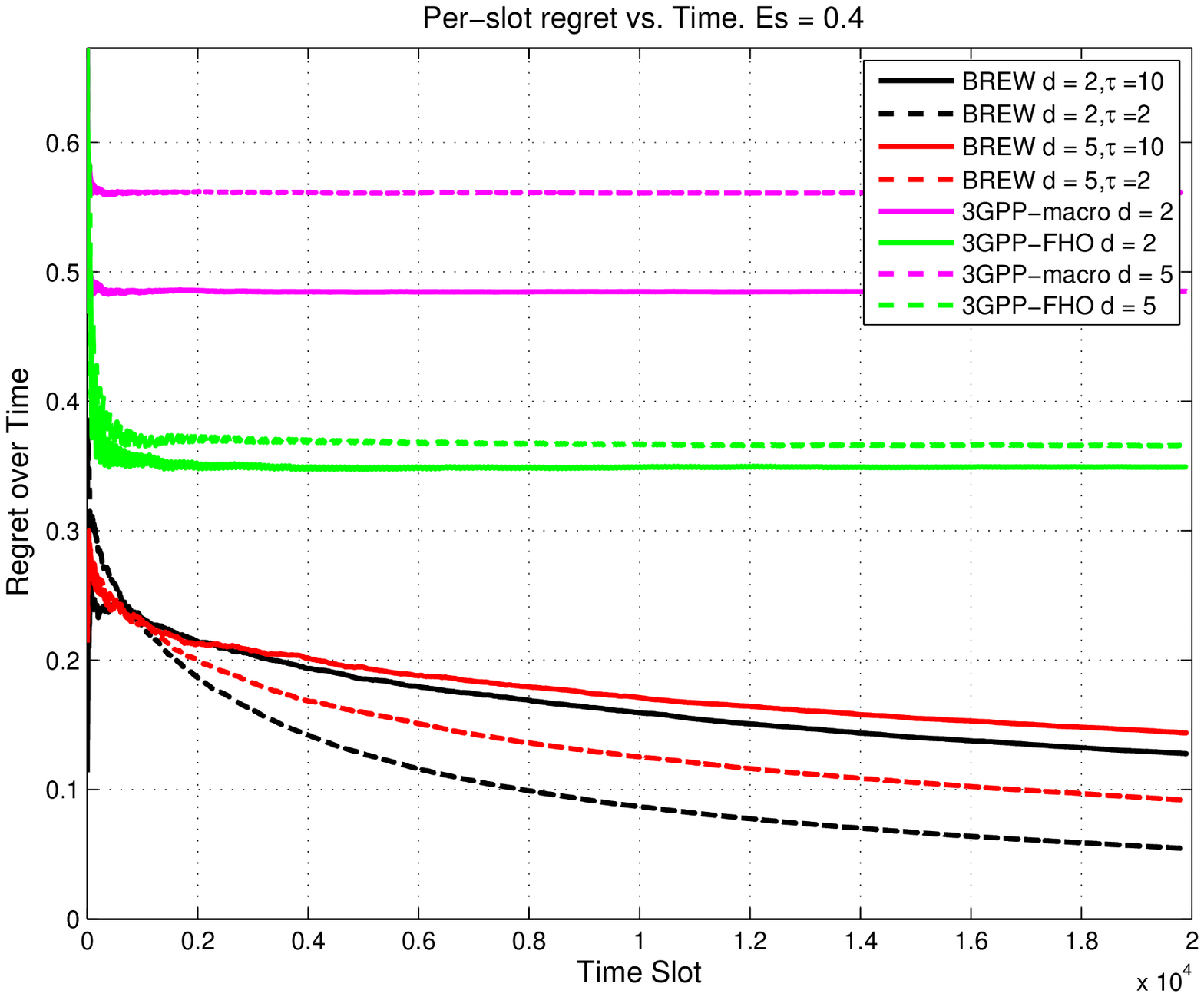}
\label{fig:simDelayN6Es4}}
\subfigure[$N = 12$, $E_s = 0.2$]{%
\includegraphics[width=0.45\textwidth]{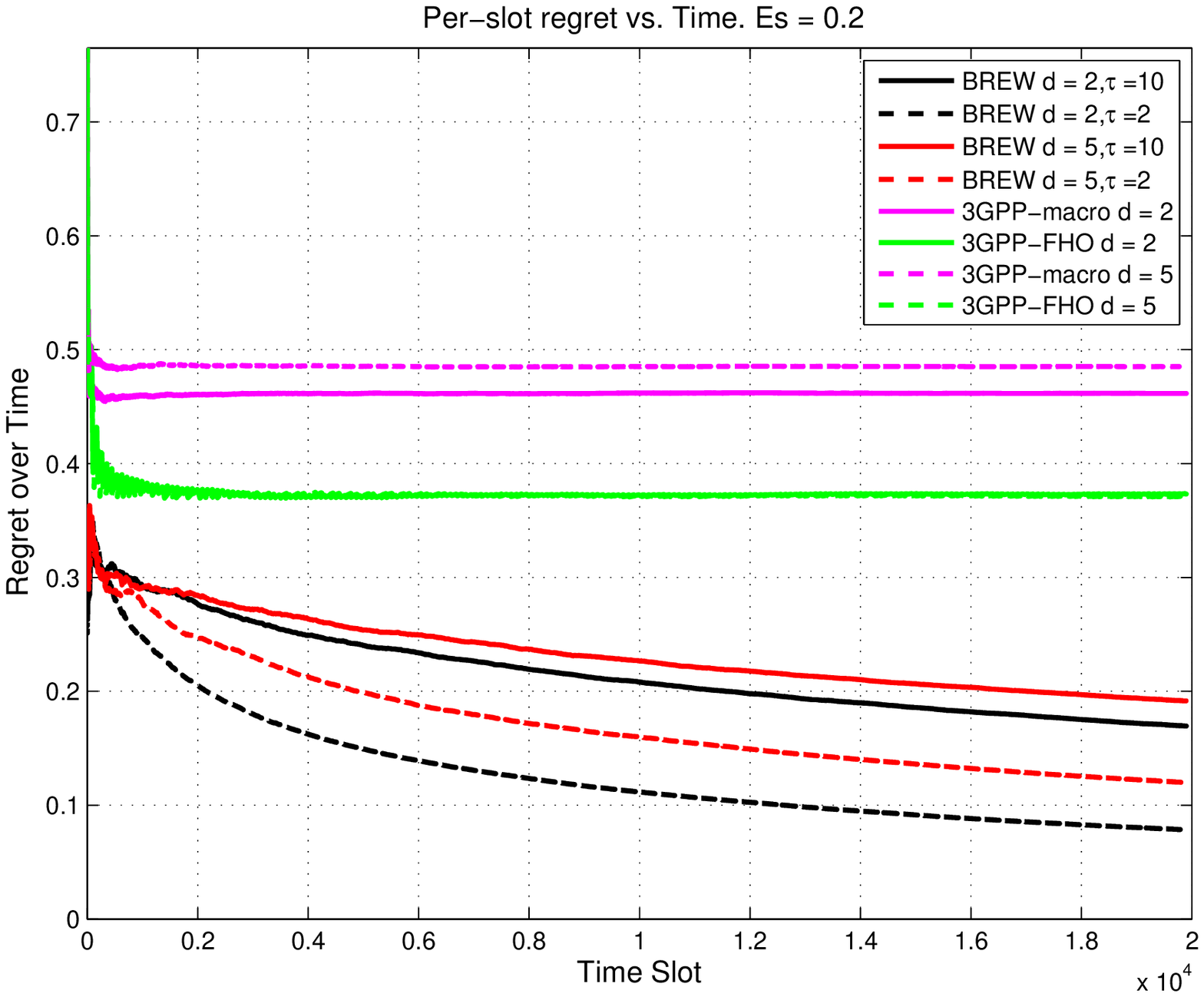} 
\label{fig:simDelayN12Es2}}
%\quad
\subfigure[$N = 12$, $E_s = 0.4$]{%
\includegraphics[width=0.45\textwidth]{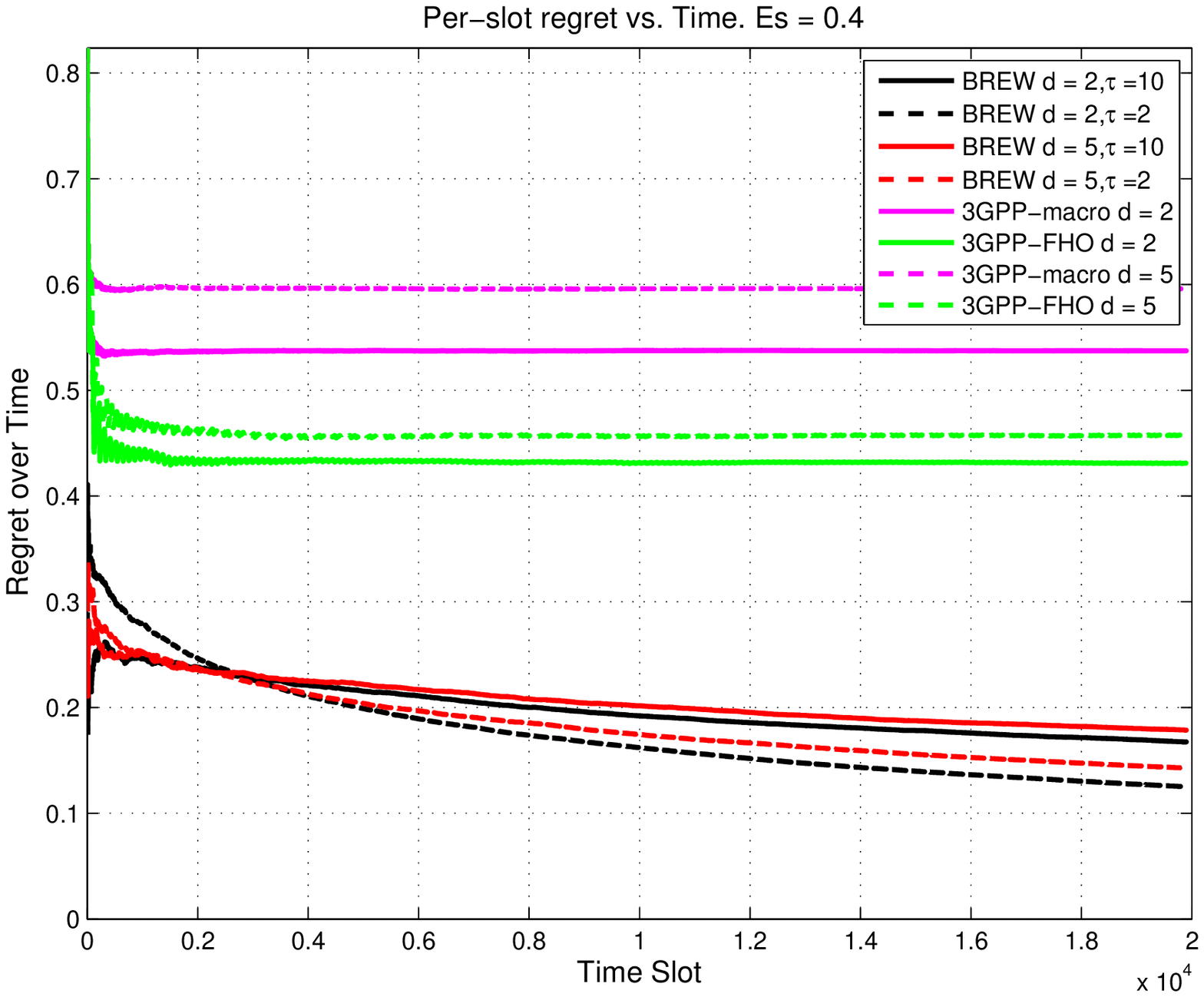}
\label{fig:simDelayN12Es2}}
\caption{Impact of delayed feedback to the regret for BREW, 3GPP-macro and 3GPP-FHO.}
\label{fig:simDelay}
\end{figure*}

The performance comparison is reported in Figure~\ref{fig:normal1} for $N=6$ and Figure~\ref{fig:normal2} for $N=12$, respectively, where the latter represents an extreme UDN deployment.  In particular, the total energy consumption of each algorithm is compared against the genie-aided solution where the UE selects the best SBS with the minimum energy consumption from the very beginning. The regret of each algorithm is normailized by time. A few important observations can be made from this system level simulation. First of all, we can see that 3GPP-FHO outperforms 3GPP-macro in terms of energy consumption, but both solutions do not exhibit a decaying per-slot regret, confirming the regret analysis in Section~\ref{sec:3gpp}. As a result, these solutions cannot converge to the optimal SBS asymptotically. The proposed BREW algorithm, however, has a diminishing per-slot regret and will converge to the best SBS, supporting the regret analysis in Section~\ref{sec:reg_BREW}. Second, the performance of existing solutions degrade significantly once the handover cost is explicitly taken into account, and such degradation remains constant over time. The increased handover cost also impacts the energy consumption of the proposed BREW algorithm, but thanks to the batched nature and the built-in exploration-exploitation tradeoff,  the amount of handovers will gradually reduce over time,  mitigating the effect of the increased handover cost. For the considered system simulations, the proposed BREW solution can achieve  $20\%-30\%$ less energy consumption (depending on the parameter setting) with a moderate time duration, and more than $60\%$ gain asymptotically, over the existing solutions. Finally, the effect of batch size $\tau$ can be analyzed from the figures, which reveals the inherent tradeoff between quick exploration (and hence finding the optimal SBS faster) and the handover cost associated with such exploration. As we can see, for both $N=6$ and $N=12$, there exists an initial period where a large batch size results in less energy consumption. This is because in the initial time slots, a small batch size would lead to more frequent handovers for exploration, which results in both more handover costs and selecting sub-optimal SBSs more. However, as time goes by, the speed of exploration slows down, and we will enter a separate region where a large batch size leads to more time spent on sub-optimal SBSs, which increases the energy consumption.

\begin{figure*}[h]
\centering
\subfigure[$N = 6$, $E_s = 0.2$]{%
\includegraphics[width=0.45\textwidth]{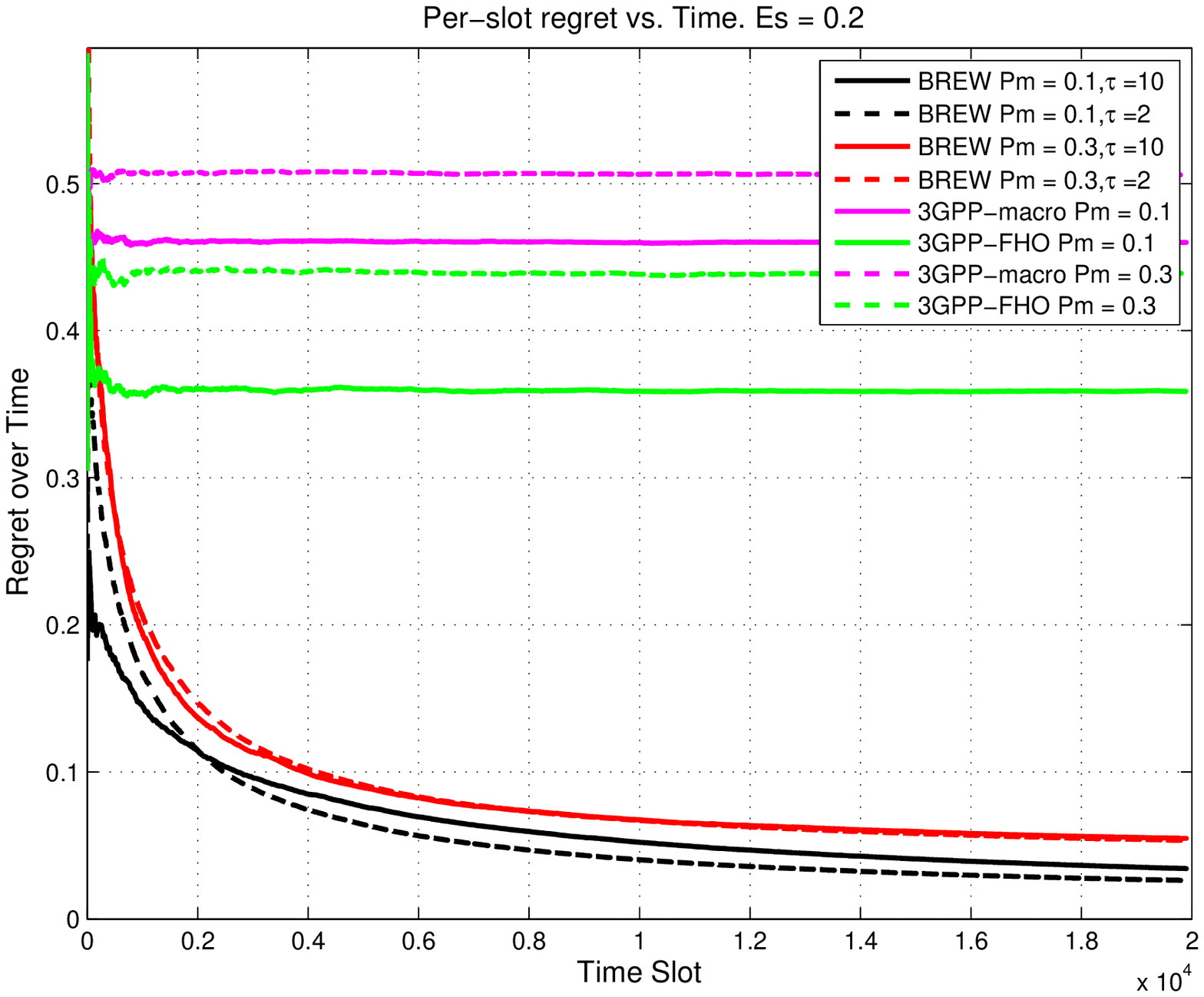}
\label{fig:simMissN6Es2}}
%\quad
\subfigure[$N = 6$, $E_s = 0.4$]{%
\includegraphics[width=0.45\textwidth]{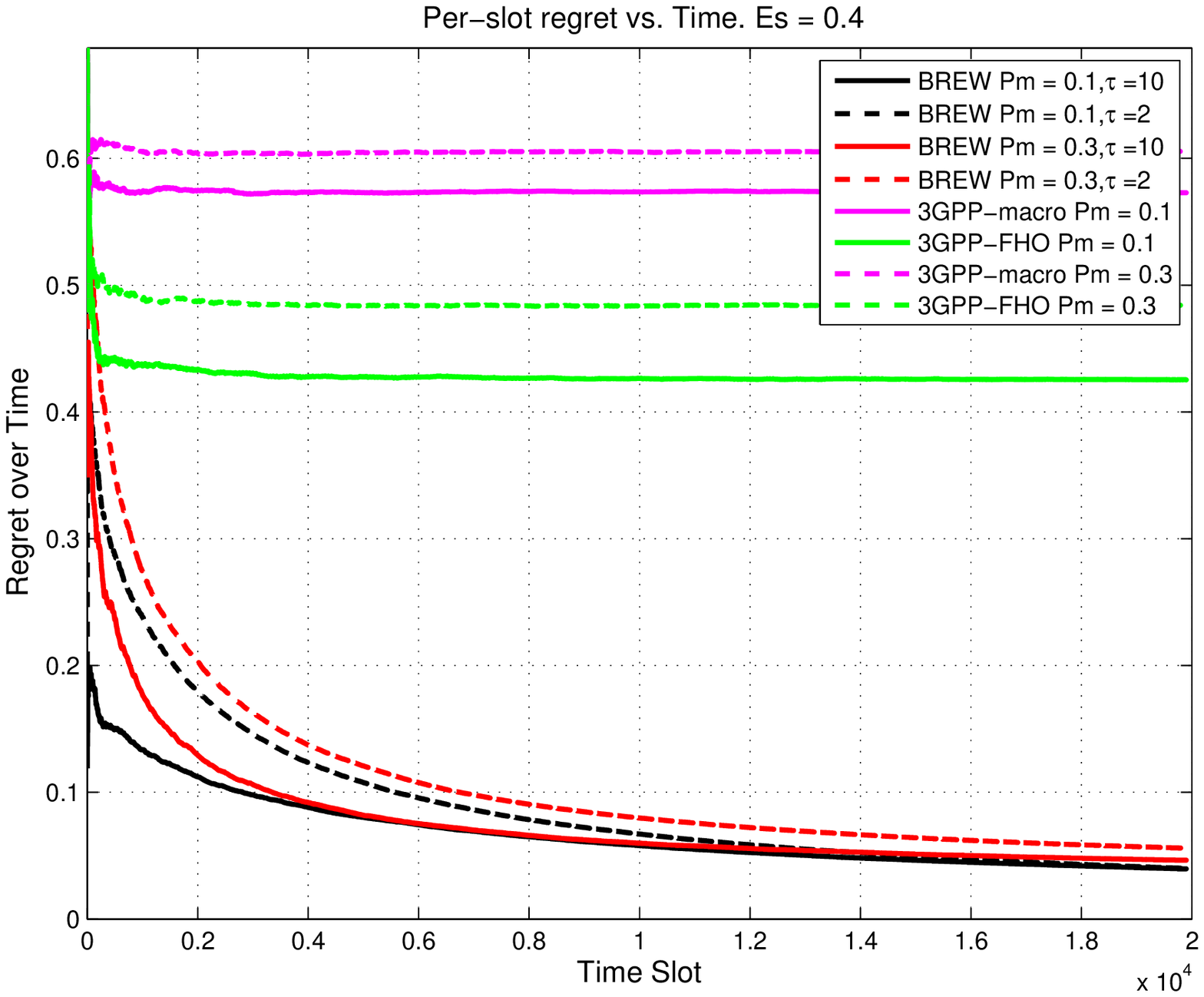}
\label{fig:simMissN6Es4}}
\subfigure[$N = 12$, $E_s = 0.2$]{%
\includegraphics[width=0.45\textwidth]{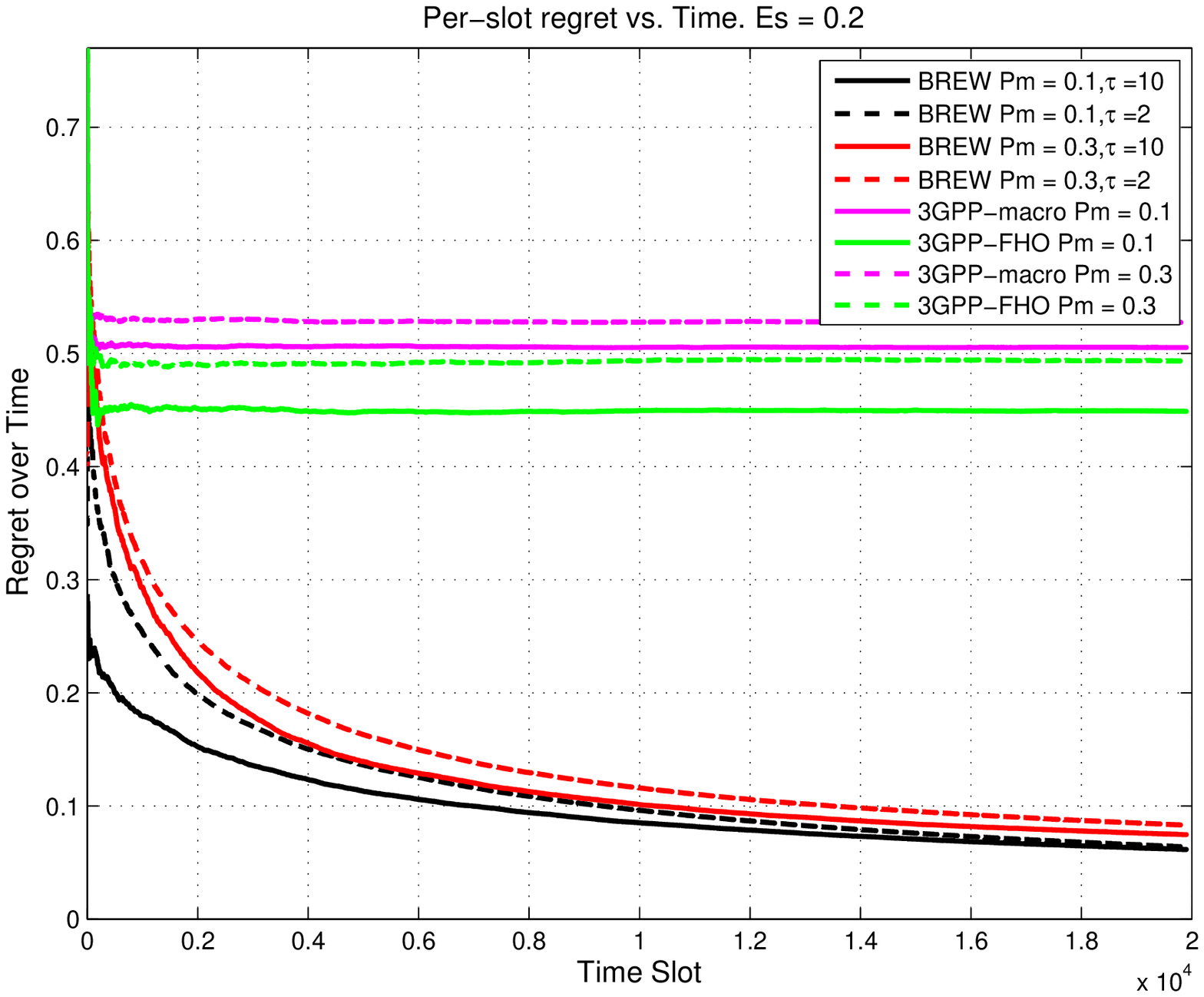} 
\label{fig:simMissN12Es2}}
%\quad
\subfigure[$N = 12$, $E_s = 0.4$]{%
\includegraphics[width=0.45\textwidth]{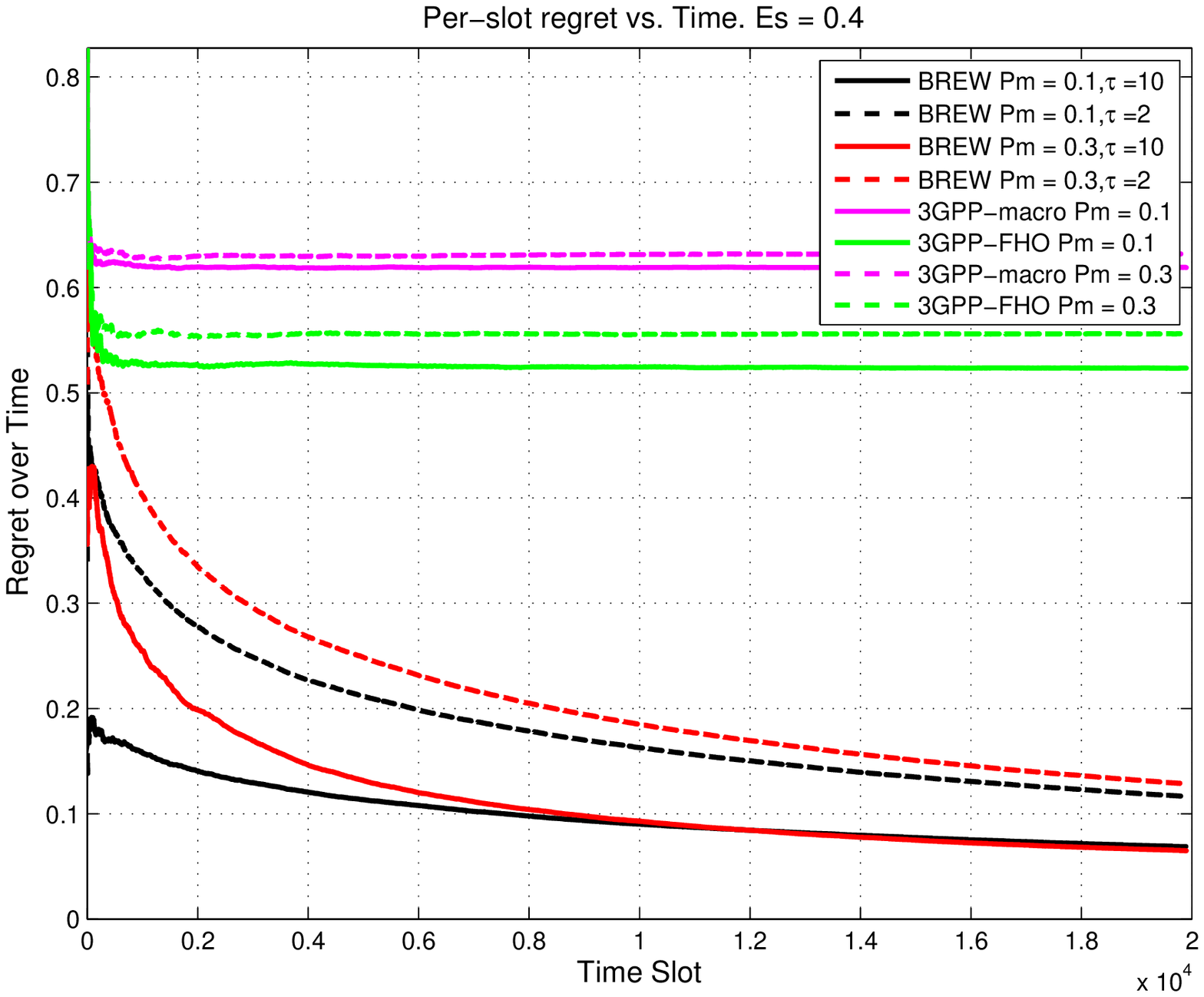}
\label{fig:simMissN12Es2}}
\caption{Impact of missing feedback to the regret for BREW, 3GPP-macro and 3GPP-FHO.}
\label{fig:simMiss}
\end{figure*}

Figure~\ref{fig:simDelay} studies the impact of delayed feedback on the performance of the three algorithms. We can see that additional delays of sending the energy consumption feedback increases the regret for all algorithms, and larger delay leads to more severe regret increase. However, an important observation from Figure~\ref{fig:simDelay} is that the impact of delayed feedback on BREW is mild when the batch size is moderate or the handover cost is large. This is due to the fact that when the batch size is not very small, the handover decision will be slightly postponed due to the delayed feedback, and the UE of interest stays on the same SBS while waiting for feedback. Because the accumulated feedback comes from a batch, a slight offset will not significantly alter the averaged feedback of the batch, which provides robustness against information obsolete. Additionally, the larger handover cost will further penalize myopic protocols, where the handover decisions are based on outdated information. The simulation results show the robustness of the BREW algorithm against delayed feedback.

Similarly, Figure~\ref{fig:simMiss} reports the simulation results when each time slot, the feedback energy consumption may be missing following an i.i.d. Bernoulli model with missing probability $P_m$. The BREW algorithm used in the simulation is the extended version as in Algorithm~\ref{alg:BREWmiss} that considers $P_m$. It can be concluded that missing feedback impact all three algorithms in terms of the regret performance, but with different behavior. For the two 3GPP solutions, the regret quickly converges and there exists an almost constant gap asymptotically. For the extended BREW algorithm, however, there exists a rather large gap during the initial period. This is due to the lack of accurate information and hence  missing feedback have a bigger impact to the regret. As the algorithm gradually learns the loss information, the impact of missing feedback diminishes, as shown by the very small gap between different $P_m$ values for large $t$. 

\begin{figure*}%[!t]
    \centerline{
        \subfigure[ $E_s=0.2$ ]{ 
        \includegraphics[width=0.45\textwidth]{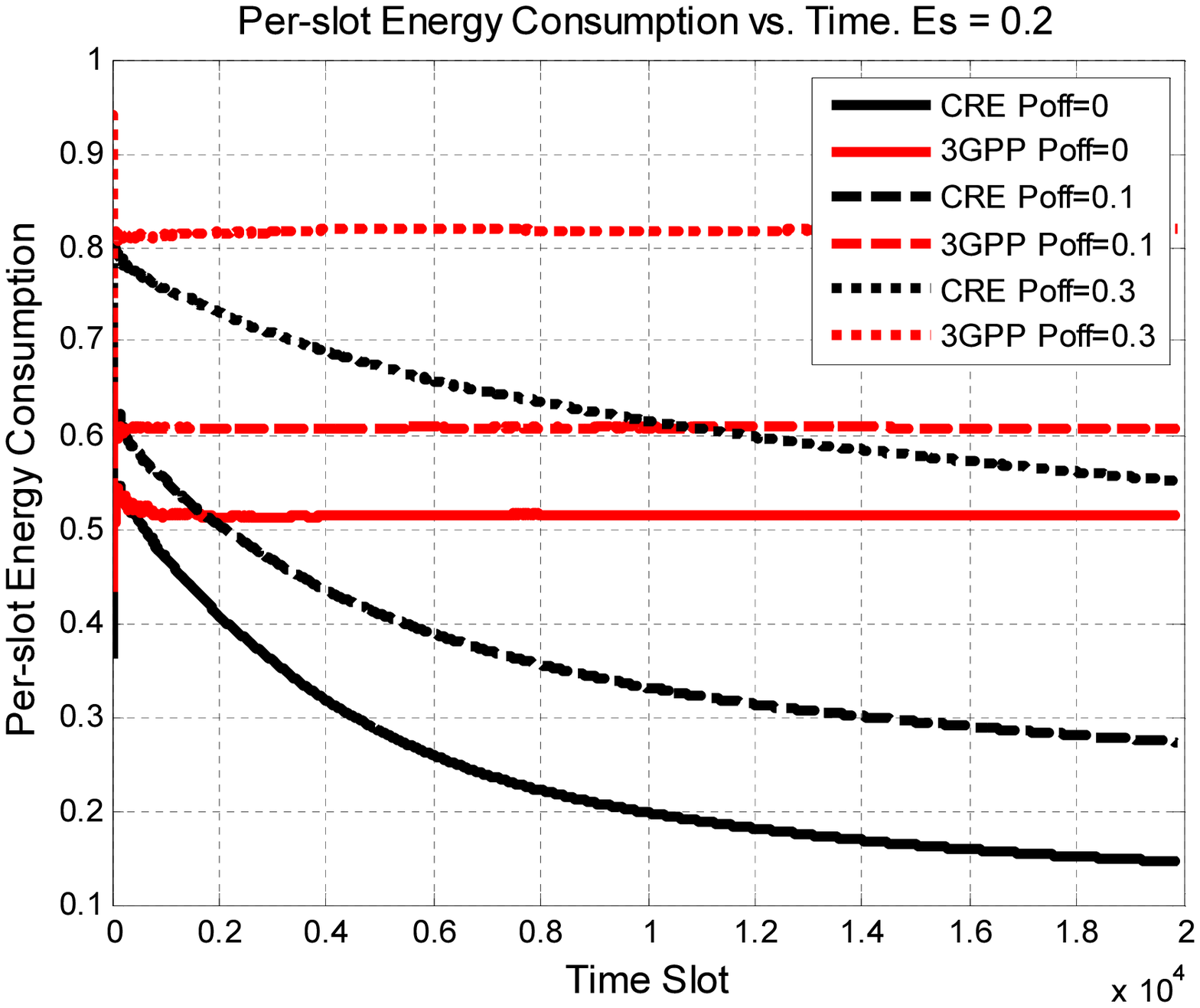}
        \label{fig:onoff1}}
        \hfil
        \subfigure[ $E_s=0.4$ ]{         
        \includegraphics[width=0.45\textwidth]{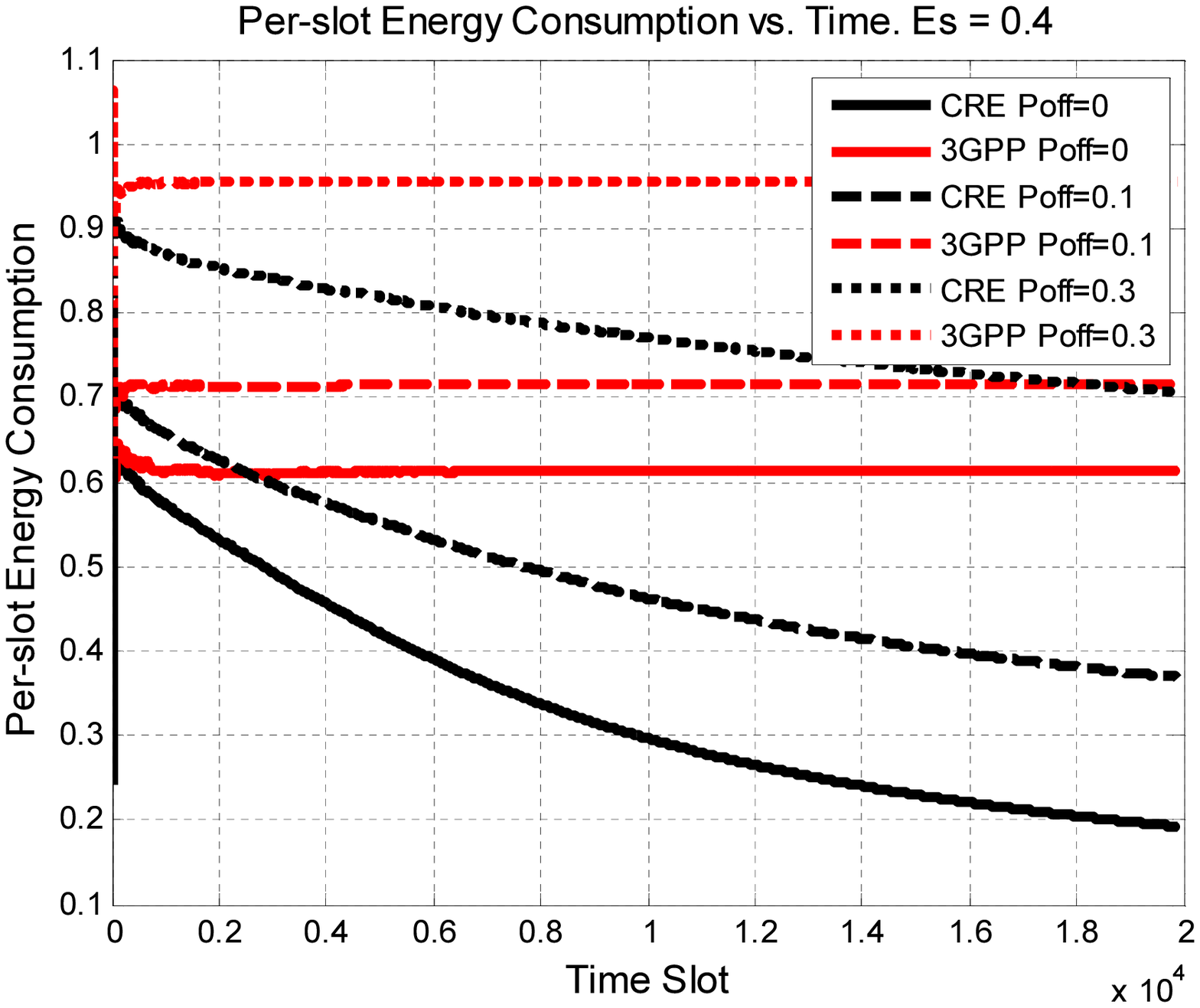}
        \label{fig:onoff2}}}
    \caption{Impact of dynamic SBS on/off to the per-time-slot energy consumption versus time for CRE and 3GPP-macro.}
    \label{fig:onoff}
\end{figure*}

Finally, we study the impact of dynamic SBS on/off to the mobility algorithms. We assume that the $N$ neighboring SBSs are installed by end-users and  they can be turned on and off at the users' discretion. To model the dynamic SBS presence, we assume that at each time slot, all $N$ SBSs may independently be turned on or off following an identical Bernoulli distribution with off-probability $P_{\sf{off}}$. It is worth noting that this model presents a much bigger challenge than models where SBS dynamics have patterns. As has been discussed, the CRE algorithm presented in Section~\ref{sec:onoff} achieves a good tradeoff between complexity (number of experts) and performance. We compare the average per-slot energy consumption of CRE to 3GPP, which at each slot selects the best SBS that is not turned off. For a fair comparison, we also assume that the 3GPP metric takes into account the potential handover cost $E_s$, which is not considered in the standard 3GPP-macro algorithm. Figure~\ref{fig:onoff} presents the numerical comparison of these two algorithms with different $P_{\sf{off}}$ values. Clearly, dynamic SBS affects both algorithms, but CRE quickly outperforms the extended 3GPP-macro algorithm and the per-slot energy consumption decreases as time goes by. This is due to the gradual convergence to the best expert in the expert pool.

%%%%%%%

\section{Conclusions}
\label{sec:conc}

Emerging wireless networks have become more heterogeneous and the network density has increased significantly, both of which pose significant challenges to energy efficient mobility management. Existing solutions, mostly based on optimizing immediate system objectives, fail to achieve long-term minimum energy consumption in highly dynamic and complex wireless networks. To address this problem, we have made two novel contributions. The first is that we adopt a \textit{non-stochastic} online-learning approach to model the UDN mobility management. The key benefit of this approach, as its name suggests, is that we do not need any assumption on the statistical behavior of the SBS activities. This is extremely desirable for UDN. The other novelty is that we explicitly add the \textit{handover cost} to the utility function, which forces the resulting solution to minimize frequent handovers. 

Built upon these two key ideas, we have proposed the BREW algorithm which relies on batching to explore in bulk, thus reducing the handovers that are typically required for exploration. A sublinear regret upper bound for BREW is proved. We then study how the BREW algorithm can be adjusted to deal with various system imperfections, including delayed or missing feedback. Most importantly, we have studied the  impact of dynamic SBS on/off, which often arises in user-deployed small cell networks. We first prove an impossibility result with respect to any arbitrary SBS on/off. Then, a novel strategy, called ranking expert (RE), is proposed to simultaneously address the handover cost and the availability of SBS. The complete RE algorithm results in a large number of experts, which incurs significant complexity. We further propose a contextual ranking expert (CRE) algorithm that reduces the number of experts significantly. Regret bound is proved for both RE and CRE with respect to the best expert. Simulation results show a significant improvement to the overall system energy consumption. More importantly, the gain is robust against various system dynamics.

There are some interesting problems that have not been fully addressed in this work, which are the subjects of potential future work. For example, the regret upper bounds developed in this work are mostly for a given set of algorithm parameters, for which sublinear regret is rigorously proven. It is of interest to study the performance bound variation and tightness with respect to the algorithm parameters. Another important question is how to further enhance the ranking expert solutions in Algorithm~\ref{alg:RE} and \ref{alg:CRE}, in terms of the algorithm complexity and the corresponding regret bound.

\appendices
\section{Proof of Theorem~\ref{thm:SwitchRegret} }
\label{apd:thm1}

We cite two known results in non-stochastic bandit theory that will be used in the proof. These results are modified to fit into the problem setting of Theorem~\ref{thm:SwitchRegret}.

\begin{prop}
\label{prop1} (Part of Theorem 2.1 in \cite{bubeck2010jeux})
The standard pseudo-regret bound for the any-time EXP3 algorithm \cite{bubeck2010jeux}, where the parameter $\gamma_t$ does not depend on the time horizon $T$, is given by
\begin{equation}
\label{eqn:apd1}
{R}(T) \leq \sqrt{4.5 T N \log N},
\end{equation}
for $\gamma_t = \sqrt{ (2 \log N)/ (t N) }$, $t \in \mathbb{N}_{+}$.
\end{prop}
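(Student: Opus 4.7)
The plan is to prove the pseudo-regret bound for anytime EXP3 by casting the algorithm as online mirror descent with the negative-entropy regularizer and a time-varying step size, then carefully balancing the standard two-term decomposition of the mirror-descent regret when $\gamma_t$ is plugged in.

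First I would set up the bookkeeping. Writing $\hat{\ell}_t(a) = (\ell_t(a)/p_t(a)) \mathds{1}_{\{a=a_t\}}$ for the importance-weighted loss estimator used inside EXP3, a one-line calculation shows $\bb{E}[\hat{\ell}_t(a)\mid \mathcal{F}_{t-1}] = \ell_t(a)$, so the actual pseudo-regret $\bb{E}[\sum_t \ell_t(a_t)] - \min_{a^*} \sum_t \ell_t(a^*)$ equals $\bb{E}[\sum_t \langle p_t - e_{a^*}, \hat{\ell}_t\rangle]$. This lets me analyze the algorithm entirely through the surrogate losses. I then recognize the EXP3 update as mirror descent on the simplex with the negative-entropy regularizer $\psi(p)=\sum_a p(a)\log p(a)$, uniform initialization $p_1=(1/N,\ldots,1/N)$, and step size $\eta_t=\gamma_t$.

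Next I would apply the standard mirror-descent regret lemma with time-varying step size: for any $p^*\in\Delta_N$,
\begin{equation}
\sum_{t=1}^T \langle p_t - p^*, \hat{\ell}_t\rangle \;\leq\; \frac{D_\psi(p^*,p_1)}{\eta_T} \;+\; \sum_{t=1}^T \frac{\eta_t}{2}\,\|\hat{\ell}_t\|_{p_t}^2,
\end{equation}
where $\|x\|_{p}^2 := \sum_a p(a) x(a)^2$ is the local norm induced by $\psi$ (exponential weights is well-known to admit this tight local-norm bound, not merely a global $L_\infty$ bound). The first term is bounded by $\log N/\eta_T$ since $D_\psi(p^*,p_1)\le \log N$ for uniform $p_1$. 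For the second term, I take conditional expectations: $\bb{E}[\sum_a p_t(a)\hat{\ell}_t(a)^2\mid\mathcal{F}_{t-1}] = \sum_a \ell_t(a)^2 \le N$ using $\ell_t(a)\in[0,1]$. Therefore
\begin{equation}
\bb{E}\!\left[\sum_{t=1}^T \langle p_t - e_{a^*}, \hat{\ell}_t\rangle\right] \;\le\; \frac{\log N}{\eta_T} + \frac{N}{2}\sum_{t=1}^T \eta_t.
\end{equation}

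Finally I would plug in $\eta_t=\sqrt{(2\log N)/(tN)}$. The first term becomes $\sqrt{NT\log N/2}$. For the second, $\sum_{t=1}^T 1/\sqrt{t}\le 2\sqrt{T}$, giving $(N/2)\sum_t \eta_t \le \sqrt{2NT\log N}$. Adding the two contributions yields $(\tfrac{1}{\sqrt 2}+\sqrt 2)\sqrt{NT\log N} = \tfrac{3}{\sqrt 2}\sqrt{NT\log N} = \sqrt{4.5\,NT\log N}$, which is the claimed bound. The one place that genuinely needs care—and which I flag as the main obstacle—is justifying that the mirror-descent inequality above still holds with a \emph{decreasing} $\eta_t$ rather than a fixed horizon-tuned $\eta$; this is where the anytime property matters. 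The standard fix (as in the Bubeck--Cesa-Bianchi survey) is to track the potential $\Phi_t(p) := \psi(p)/\eta_t$ and observe that because $\psi(p^*)-\psi(p_1)\le \log N$ and $1/\eta_t$ is nondecreasing, the additional telescoping term $\sum_t(1/\eta_t-1/\eta_{t-1})(\psi(p^*)-\psi(p_1))$ collapses into the single $\log N/\eta_T$ already charged above, leaving the local-norm term unchanged. Once that verification is in place, the constants line up exactly and the bound follows.
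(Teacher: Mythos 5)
Your proof is correct, but note that the paper itself does not prove this proposition at all: it is imported verbatim as ``Part of Theorem 2.1 in \cite{bubeck2010jeux}'' and used as a black box inside the proof of Theorem~\ref{thm:SwitchRegret}. What you have done is reconstruct the argument from the cited source, and your reconstruction is sound. The decomposition into $\log N/\eta_T$ plus the local-norm term, the unbiasedness of the importance-weighted estimator, the bound $\bb{E}\bigl[\sum_a p_t(a)\hat{\ell}_t(a)^2\mid\mathcal{F}_{t-1}\bigr]=\sum_a \ell_t(a)^2\leq N$, and the arithmetic $\bigl(\tfrac{1}{\sqrt{2}}+\sqrt{2}\bigr)^2=4.5$ all check out exactly against the stated constant. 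You also correctly identify the one genuinely delicate point: the update rule (\ref{eqn:alg4}) recomputes the distribution from the \emph{cumulative} estimated loss with the \emph{current} learning rate $\gamma_l$, so the fixed-step mirror-descent telescoping does not apply directly, and one must run the potential-function argument with $\psi(p)/\eta_t$ and use that $1/\eta_t$ is nondecreasing together with $\psi(p^*)-\psi(p_1)\leq\log N$ to collapse the extra terms into the single $\log N/\eta_T$ charge. That is precisely the device used in the reference, so your route and the cited one coincide; the only difference is that the paper buys brevity by citation while you supply the self-contained derivation, which has the added value of making explicit why the anytime (horizon-independent) choice of $\gamma_t$ still yields the clean constant $4.5$.
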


\begin{prop}
\label{prop2} (Part of Theorem 2 in \cite{arora2012online})
When $\tau > 1$, the regret $R'(T)$ of an algorithm with respect to the constant actions when the reward sequence is generated by an $m$-memory-bounded adaptive adversary is bounded by 
\begin{equation}
\label{eqn:apd2}
R'(T) \leq \tau R(\frac{T}{\tau}) + \frac{Tm}{\tau} + \tau
\end{equation}
where $R(T)$ denotes an upper bound on the standard pseudo-regret of the algorithm.\footnote{Definitions of $m$-memory-bounded adaptive adversary and standard pseudo-regret can be found in \cite{arora2012online}.}
\end{prop}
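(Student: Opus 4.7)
\textbf{Proposal for Proving Proposition \ref{prop2}.}

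The plan is to reduce the batched game against an $m$-memory-bounded adaptive adversary to a standard (per-round) bandit game on a horizon of length $K=\lfloor T/\tau\rfloor$, and then pay carefully for the ``boundary'' slots inside each batch where memory effects from the previous batch have not yet washed out. First I would partition the horizon $\{1,\ldots,T\}$ into $K=\lfloor T/\tau\rfloor$ full batches of length $\tau$, plus at most one residual batch of length $<\tau$; the residual batch contributes at most $\tau$ to the regret (this gives the trailing $+\tau$ term). Throughout I fix an arbitrary comparator action $a^{*}\in\mathcal{N}_\text{SBS}$ and write $a(l)$ for the action the batched algorithm commits to for all $\tau$ slots of batch $l$.

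Next I would construct a \emph{virtual} single-round game on $K$ rounds as follows. For each batch $l$ and each action $a$, define $\tilde{\ell}_l(a)$ to be the per-slot loss that would be observed at slot $l\tau$ had the action $a$ been played for the entire batch $l$ (so that, by the $m$-memory-bounded assumption, the adversary's loss at slot $l\tau$ depends only on $a$). The virtual round-$l$ loss of arm $a$ is then $\tau\,\tilde{\ell}_l(a)\in[0,\tau]$. The key observation is that against an $m$-memory-bounded adversary, once the same action $a(l)$ has been played for $m$ consecutive slots of batch $l$, the subsequent $\tau-m$ losses coincide exactly with $\tilde{\ell}_l(a(l))$. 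Therefore the \emph{actual} cumulative loss in batch $l$ differs from $\tau\,\tilde{\ell}_l(a(l))$ by at most $m$ (since each per-slot loss lies in $[0,1]$). Summed over $K$ batches, this bookkeeping error is at most $m\cdot K\le mT/\tau$, which produces the $Tm/\tau$ term. Crucially, for the comparator $a^{*}$ held constant across \emph{all} $T$ slots, there is only a single ``settling'' transient of length $m$ at the very beginning, so the total discrepancy between $\sum_l \tau\,\tilde{\ell}_l(a^{*})$ and the true loss of $a^{*}$ over $T$ slots is only $O(m)$; this is absorbed harmlessly into the $Tm/\tau$ slack, so no improvement is possible in the direction that matters.

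Having established the coupling, I would apply the standard pseudo-regret bound $R(\cdot)$ of the underlying algorithm to the $K$-round virtual game. Because the virtual per-round losses lie in $[0,\tau]$ rather than $[0,1]$, the pseudo-regret of the algorithm (interpreted as playing on batches) is bounded by $\tau\,R(K)=\tau\,R(T/\tau)$ by a linear-in-loss-range rescaling of the regret guarantee. Chaining the three contributions — virtual-game pseudo-regret $\tau R(T/\tau)$, boundary/settling error $Tm/\tau$, and the terminal partial batch of length at most $\tau$ — yields the claimed bound $R'(T)\le \tau R(T/\tau)+Tm/\tau+\tau$.

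The main obstacle is justifying step two rigorously: one must argue that the virtual loss vectors $(\tilde{\ell}_l(\cdot))_{l=1}^{K}$ form a legitimate input to the standard regret bound, i.e.\ that they are effectively \emph{oblivious} at the batch level in the sense required by the pseudo-regret analysis of the base learner, even though the original adversary is adaptive. This follows because an $m$-memory-bounded adaptive adversary is completely described by a sequence of loss functions $\ell_t(a_{t-m},\ldots,a_t)$ that it commits to up front; conditional on this sequence, $\tilde{\ell}_l(a)$ is a fixed deterministic quantity, and all randomness in the learner's trajectory comes from its own internal coins. A clean way to formalize this is to define $\tilde{\ell}_l(a)$ coordinate-wise from the adversary's pre-committed loss functions and verify by induction on $l$ that the base learner's distribution over $a(l)$ is independent of the future virtual losses, which is exactly the condition under which $R(K)$ applies. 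The rest is routine algebra combining the three terms.
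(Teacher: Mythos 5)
You should first be aware that the paper does not actually prove Proposition~\ref{prop2}: it is imported verbatim from Theorem~2 of \cite{arora2012online} and used as a black box in Appendix~A, so there is no in-paper argument to compare against. Your reconstruction follows the same mini-batching reduction as that source: partition the horizon into $K=\lfloor T/\tau\rfloor$ full batches plus a residual (the $+\tau$ term), compare the realized per-batch losses to ``stabilized'' counterfactual losses that depend only on the constant action played in the batch, charge at most $m$ per batch for the transient at each batch boundary (the $Tm/\tau$ term), and invoke the base learner's bound $R(\cdot)$ on the induced $K$-round game (the $\tau R(T/\tau)$ term). The architecture and the bookkeeping of the three terms are correct and match the original proof.

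Two points in your write-up need repair. First, you define $\tilde{\ell}_l(a)$ as the loss ``that would be observed at slot $l\tau$'' and then claim the last $\tau-m$ losses of batch $l$ ``coincide exactly'' with it. An $m$-memory-bounded adversary commits to functions $f_t(a_{t-m},\ldots,a_t)$ that may vary with $t$ within a batch, so after stabilization the slot-$t$ loss is $f_t(a,\ldots,a)$, which is generally different at each $t$; pinning $\tilde{\ell}_l(a)$ to the single slot $l\tau$ makes that coincidence claim false. The fix is to define $\tilde{\ell}_l(a)=\frac{1}{\tau}\sum_{t=(l-1)\tau+1}^{l\tau}f_t(a,\ldots,a)$; then the realized batch total differs from $\tau\,\tilde{\ell}_l(a(l))$ by at most $m$ (only the first $m$ slots can deviate, each by at most $1$), and the rest of your accounting goes through unchanged. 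Second, your obliviousness argument addresses only half of the issue: even with pre-committed functions $f_t$, the feedback the base learner actually receives in batch $l$ is the realized average, which through its first $m$ slots depends on $a(l-1)$ and is therefore not a fixed function of $a(l)$ alone. You need either to argue that the base learner's pseudo-regret bound tolerates this bounded, history-measurable perturbation of the feedback, or to invoke the fact that EXP3-type pseudo-regret bounds hold against non-oblivious loss sequences (which is how \cite{arora2012online} closes this step). Your induction sketch gestures at the right condition but does not discharge it.
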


The proof of Theorem~\ref{thm:SwitchRegret} then follows by recognizing that the adversarial bandit problem with switching costs is a special case of a $m=1$ memory bounded adversary. With $\tau = \lceil B_N T^{1/3} \rceil$, we have
\begin{equation}
{R}\left(\frac{T}{\tau}\right) \leq  \sqrt{  \frac{T B_N^{-3}}{\lceil B_N T^{1/3} \rceil} } 
 \leq T^{1/3} B_N^{-2},  \label{eqn:apd3}
\end{equation}
and thus
\begin{eqnarray}
R_{\mathbf{a}}(T)  &\leq&   \tau   T^{1/3} B_N^{-2} + \frac{T}{\tau} + \tau \nonumber \\
 &\leq& 2 B^{-1}_N  T^{2/3} + \left( B_N + B^{-2}_N  \right) T^{1/3} + 1.
\end{eqnarray}

\section{Proof of Theorem~\ref{thm:DelayRegret} }
\label{apd:thm2}

In \cite{arora2012online} a $d$-memory-bounded adversary is defined as an adversary which is restricted to choose a loss function that depends on the $d+1$ most recent actions of the learner. Hence, the loss of the learner generated by a $d$-memory-bounded adversary at time slot $t$ can be written as $f_t( a_{t-d}, \ldots , a_t )$. Now consider our setting in which the energy cost of choosing an SBS at time $t$ only depends on the state of the network at time $t$. This cost is given by the function $\{  E_t(a) \}_{ a \in {\cal N}_{\text{SBS} } }$. When the feedback is received by the learner with a delay of $d$ time slots, we can model the cost incurred by the learner as cost assigned by a $d$-memory-bounded adversary whose loss function is 
\begin{equation}
f_t( a_{t-d}, \ldots , a_t ) = E_{t-d}(a_{t-d}) + E_s \mathds{1}_{\{a_{t-1} \neq a_t\}}
\end{equation}
The result then follows the same steps as Appendix~\ref{apd:thm1}.
%The result follows from combining the standard regret bound of EXP3 without mixing given in \cite{bubeck2010jeux} with the generic regret bound for the $d$-memory-bounded adversary given in \cite{arora2012online}. 
%When the feedback is delayed by $d$ time slots, this is equivalent to a model in which the reward is generated by a so-called $d+1$ memory bounded adversary \cite{cesa2013online}, where the extra memory is to count for handover cost. Then, the result follows the same steps as Appendix~\ref{apd:thm1}. 

\section{Proof of Proposition~\ref{prop:3gpp1}}
\label{apd:3gpp1}

We separately prove the linear regret in $T$ for stochastic and non-stochastic energy consumption models.  For a stochastic model, we denote the average RSRP or RSRQ of each SBS as $r_{n} = \bb{E} [R_t(n)]$, and the probability that the metric of SBS $n$ falls below the pre-determined threshold $\theta$ as $\sigma_{n} = P(R_t(n) < \theta)$. Without loss of generality and to avoid trivial conditions, we assume that $r_{1} >r_{2} > \cdots >r_{N}$. 

A regret lower bound for the 3GPP handover protocols described in Section~\ref{sec:3gpp} can be achieved by a genie-aided policy where switching only happens between SBS $1$ and $2$. In other words, whenever the performance metric of the best SBS falls below $\theta$, the user only switches to the second-best SBS; when the performance metric of the second-best SBS falls below $\theta$, it comes back to the best SBS. This policy can be modelled as a two-state Markov process where each state $n$ has a transition probability $\sigma_{n}$. We denote the steady-state distribution for the sub-optimal SBS $2$ as $\rho_2$, and it can be shown $\rho_2 > 0$ for non-trivial cases. Thus, the genie-aided policy achieves a linear regret $\rho_2 (r_{1}-r_{2})T $ asymptotically.

For a non-stochastic model, we prove the linear regret in $T$ by constructing a specific sequence of  metrics $\{R_{t}(n)\}$. For simplicity, we only give one example for $N=2$. Consider $R_{1}(1) > R_{1}(2) $ so that at time slot $1$ the best SBS is selected. Then we let $R_{2}(1) < \theta < R_{2}(2)$ so that the UE switches to the sub-optimal SBS. We then fix $R_{t}(1) >R_{t}(2) > \theta$ for all $t>2$. In this example, the UE will be stuck with the sub-optimal SBS $2$ from time slot 2 to $T$, thus achieving a linear regret in $T$.

\section{Proof of Theorem~\ref{thm:linearregret} }
\label{apd:thm3}

The adversary defined in Theorem~\ref{thm:linearregret} generates ${\cal N}_{t+1}$ based on $a_t$. Consider the worst-case scenario where it simply lets ${\cal N}_{t+1} = {\cal N}_{ \text{SBS} } - \{ a_t \}$. 
This forces the UE to switch at every time slot. Hence it incurs a handover loss of $E_s T$. As a result, the loss of the learning algorithm is at least $E_s T$.

Since only one SBS is inactive in each time slot, there exists at least one SBS which is active in at least $T (1 - 1/N)$ time slots. Let $\tilde{a}$ denote such an SBS. SBS $\tilde{a}$ will be inactive in at most $T/N$ time slots, which means that any policy that selects SBS $\tilde{a}$ when it is available needs to switch at most $T/N$ times. Thus, the cost of such a policy is bounded above by $T (1 - 1/N) E_{\max} + T / N $, where  the first term denotes the worst-case energy consumption from $\tilde{a}$ at time slots when it is active and the second term denotes the worst-case energy consumption plus handover cost due to the slots in which $\tilde{a}$ is inactive\footnote{Recall that the normalized energy consumption in a time slot is upper bounded by $1$.}. As a result, we have that the cost of $\tilde{a}$ is bounded above by $T (1 - 1/N) E_{\max} + T / N $.

Let $\tilde{a}^*$ denote the best SBS (the one whose cumulative loss is minimum). Then, the loss of $\tilde{a}^*$ is upper bounded by $T (1 - 1/N) E_{\max} +  T / N$. 

Hence, the difference between the loss of the learning algorithm and the loss of $\tilde{a}^*$ is at least 
\begin{eqnarray}
&& E_{{s}} T -  T (1 - \frac{1}{N}) E_{\max}-   \frac{T}{N}  \\ & = & \frac{T}{N} (N E_s - (N -1) E_{\max} - 1 )    \\
& \geq & \frac{T E_{\max}}{N} 
\end{eqnarray}
where the inequality follows from $E_s \geq E_{\max} + 1/ (N-1)$. This proves that the regret is linear in $T$.

\section{Proof of Theorem~\ref{thm:ranking} }
\label{apd:thm4}

Note that the SBS activity evolves independently of the actions of the UE. Hence, the adversary is only able to modify the current reward of the UE based on its current action. Hence, the adversary is oblivious to the actions of the UE. 
Therefore, we can use Theorem 4.2 in \cite{Bubeck:12} to bound the regret. The number of experts including the uniform expert is $(N!)^N + 1$. We obtain the result by observing that
\begin{eqnarray}
\log (  (N!)^N + 1 ) \leq  \log (  (N! + 1 )^N ) &=& N \log (N!+ 1) \nonumber \\ &\leq& N^2 \log N,
\end{eqnarray}
where the last inequality comes from $\log(N! +1 ) \leq N \log N$.

\section{Proof of Theorem~\ref{thm:CRE}}
\label{apd:thm5}

%\begin{figure*}[t]
%% ensure that we have normalsize text
%\normalsize
%%% Store the current equation number.
%%\setcounter{MYtempeqncnt}{\value{equation}}
%%% Set the equation number to one less than the one
%%% desired for the first equation here.
%%% The value here will have to changed if equations
%%% are added or removed prior to the place these
%%% equations are referenced in the main text.
%%\setcounter{equation}{5}
%\begin{eqnarray}
% \sum_{t \in \tau_b }  \mathbb{E} \left[  E_t( a_t ) + E_s \mathds{1}_{\{a_t \neq b \}} \right] - 
%  \min_{e \in {\cal E} }  \sum_{t \in \tau_b } \left[ E_t( a^e_t(b) ) + E_s \mathds{1}_{\{a^e_t(b) \neq b\}}  \right] 
% & \leq & 2 \sqrt{ |\tau_b | N \log (N! + 1 ) } \label{eqn:CEXP4bound1}    \\
% & \leq & 2 \sqrt{ |\tau_b | N^2 \log (N ) }    \\
% & \leq & 2 \sqrt{ T N^2 \log (N ) } \label{eqn:CEXP4bound}
%\end{eqnarray}
%% Restore the current equation number.
%%\setcounter{equation}{\value{MYtempeqncnt}}
%% IEEE uses as a separator
%\hrulefill
%% The spacer can be tweaked to stop underfull vboxes.
%\vspace*{4pt}
%\end{figure*}

We have that $\{  \tau_b \}_{b \in {\cal N}_\text{SBS}}$ is a random variable which depends on the randomization and the history of actions selected by CRE. Our regret bound will hold for any realization of $\{  \tau_b \}_{b \in {\cal N}_\text{SBS}}$. 
Consider the loss function 
\begin{equation*}
E_t( a_t ) + E_s \mathds{1}_{\{a_t \neq b \}}. 
\end{equation*}
By the definition of contextual regret and because $E_t$ is generated by an oblivious adversary, for any $b \in {\cal N}_\text{SBS}$ we have %\eqref{eqn:CEXP4bound1} to \eqref{eqn:CEXP4bound}, shown at the top of the page,
\begin{eqnarray}
 \sum_{t \in \tau_b }  \mathbb{E} \left[  E_t( a_t ) + E_s \mathds{1}_{\{a_t \neq b \}} \right] -  
  \min_{e \in {\cal E} }  \sum_{t \in \tau_b } \left[ E_t( a^e_t(b) ) + E_s \mathds{1}_{\{a^e_t(b) \neq b\}}  \right]   &\leq & 2 \sqrt{ |\tau_b | N \log (N! + 1 ) } \label{eqn:CEXP4bound1}    \\
 & \leq & 2 \sqrt{ |\tau_b | N^2 \log (N ) }    \\
 & \leq & 2 \sqrt{ T N^2 \log (N ) } \label{eqn:CEXP4bound}
\end{eqnarray}
where (\ref{eqn:CEXP4bound1}) comes from the standard regret bound of EXP4, and the expectation is taken with respect to the randomization of CRE when the context is $b$. Although 
$\{  \tau_b \}_{b \in {\cal N}_\text{SBS}}$ depends on the randomization of CRE, since the bound derived in (\ref{eqn:CEXP4bound}) is independent of the randomization of CRE, we get the final result by summing (\ref{eqn:CEXP4bound}) over all $b \in {\cal N}_\text{SBS}$.

% References should be produced using the bibtex program from suitable
% BiBTeX files (here: strings, refs, manuals). The IEEEbib.bst bibliography
% style file from IEEE produces unsorted bibliography list.
% -------------------------------------------------------------------------
\bibliographystyle{IEEEtran}
\bibliography{mobility}

\end{document}